\providecommand{\U}[1]{\protect\rule{.1in}{.1in}}
\newtheorem{theorem}{Theorem}
\newtheorem{lemma}[theorem]{Lemma}
\newenvironment{proof}[1][Proof]{\noindent\textbf{#1.} }{\ \rule{0.5em}{0.5em}}
\begin{document}

\title{Correlation between centrality metrics and their application to the opinion model}

\author{Cong Li$^{1}$, Qian Li$^{2}$, Piet Van Mieghem$^{1}$, H. Eugene Stanley$^{2}$,
Huijuan Wang$^{1,2}$}

\date{$^{1}$Faculty of Electrical Engineering, Mathematics and Computer Science, \\
Delft University of Technology, Delft, The Netherlands\\
$^{2}$Center for Polymer Studies, Department of Physics, \\
Boston University, Boston, Massachusetts 02215, USA\\}

\maketitle

\begin{abstract}

In recent decades, a number of centrality metrics describing network
properties of nodes have been proposed to rank the importance of nodes.
In order to understand the correlations between centrality metrics and
to approximate a high-complexity centrality metric by a strongly
correlated low-complexity metric, we first study the correlation between
centrality metrics in terms of their Pearson correlation coefficient and
their similarity in ranking of nodes. In addition to considering the
widely used centrality metrics, we introduce a new centrality measure,
the degree mass. The $m$th-order degree mass of a node is the sum of the
weighted degree of the node and its neighbors no further than $m$ hops
away. We find that the betweenness, the closeness, and the components of
the principal eigenvector are strongly correlated with the degree, the
$1$st-order degree mass and the $2$nd-order degree mass, respectively,
in both network models and real-world networks. We then theoretically
prove that the Pearson correlation coefficient between the principal
eigenvector and the $2$nd-order degree mass is larger than that between
the principal eigenvector and a lower order degree mass. Finally, we
investigate the effect of the inflexible contrarians selected based on
different centrality metrics in helping one opinion to compete with
another in the inflexible contrarian opinion (ICO) model. Interestingly,
we find that selecting the inflexible contrarians based on the leverage,
the betweenness, or the degree is more effective in opinion-competition
than using other centrality metrics in all types of networks. This
observation is supported by our previous observations, i.e., that there
is a strong linear correlation between the degree and the betweenness,
as well as a high centrality similarity between the leverage and the
degree.

\end{abstract}

\section{Introduction}

Recent research has explored social dynamics
\cite{strogatz2001exploring, boccaletti2006complex, barrat2008dynamical}
by using complex networks in which nodes represent people/agents and
links the associations between them. Such centrality metrics as degree
and betweenness have been studied in dynamic processes
\cite{comin2011identifying, kitsak2010identification, borge2012absence,
  pastor2002immunization}, such as opinion competition, epidemic
spreading, and rumor propagation on complex networks. These studies used
centrality metrics to identify influential nodes
\cite{comin2011identifying, kitsak2010identification, borge2012absence},
such as the source nodes from which a virus spreads and the nodes with
high spreading capacity, as well as to select which nodes are to be
immunized when a virus is prevalent
\cite{pastor2002immunization}. Numerous centrality metrics have been
proposed. Degree, betweenness, closeness, and principal eigenvector are
the most popular centrality metrics \cite{comin2011identifying,
  borgatti2005centrality, freeman1979centrality,
  Friedkin1991theoretical, mullen1991effects, newman2008mathematics,
  van2014graph}. Several new centrality metrics have been introduced in
a number of different fields recently. Kitsak \textit{et al.}
\cite{kitsak2010identification} studied the SIS and SIR spreading models
on four real-world networks and proposed that the $k$-shell index is a
better indicator for the most efficient spreaders (nodes) than degree or
betweenness. Reference~\cite{joyce2010new} proposes a new centrality
metric---{\it leverage}---for identifying neighborhood hubs (the most
highly-connected nodes) in functional brain networks. Leverage
centrality identifies nodes that are connected to more nodes than their
nearest neighbors. In addition to considering these widely-used
centrality metrics, we here propose a new centrality metric, {\it degree
  mass}. The $m$th-order degree mass of a node is defined as the sum of
the weighted degree of its $m$-hop neighborhood\footnote{The $m$-hop
  neighborhood of a node $i$ includes the node $i$ and all nodes no
  further away than $m$ hops from $i$.\label{footnote1}}. If the degree
of a node and of its neighbors are all high, the node has a high degree
mass.

Centrality metrics have been compared in various networks, such as
sampled networks, biological networks, food webs, and vocabulary
networks in literature \cite{comin2011identifying, kim2007reliability,
  koschutzki2004comparison, estrada2007characterization,
  li2012degree}. Comin \textit{et al.} \cite{comin2011identifying}
compared the centrality metrics characterizing the performances of nodes
in such dynamic processes as virus spreading. Kim and Jeong
\cite{kim2007reliability} compared the reliability of rank orders using
centrality metrics in sampling networks. The correlations between
centrality metrics have been studied in biological networks
\cite{koschutzki2004comparison, estrada2007characterization}.  However
correlations between centrality metrics are still not well
understood. If correlations between centrality metrics were better
understood, we might be able to rank the nodes in a network by using the
centrality metrics with a low computational complexity instead of the
ones with a high computational complexity. To investigate the
correlation between any two centrality metrics, we compute their Pearson
correlation coefficient and their similarity in ranking nodes in both
network models and real-world networks. In this work (i) we consider
Erd\H{o}s-R\'{e}nyi (ER) networks\footnote{An Erd\H{o}s-R\'{e}nyi random
  graph $G_{p}(N)$ can be generated from a set of $N$ nodes by randomly
  assigning a link with probability $p$ to each pair of nodes.} with a
binomial degree distribution \cite{Erdos1959random} and scale-free (SF)
networks\footnote{A scale-free network is characterized by a power-law
  degree distribution \textrm{Prob}$[D=k]\sim k^{-\alpha}$, with
  $k_{\min}\leq k<k_{\max}$. Here, we choose $k_{\min}=2$, $k_{\max}$ as
  the natural cutoff and $\alpha=2.5$.} with a power-law degree
distribution \cite{barabasi1999emergence, cohen2010complex}. Studying
these two network models allows us to understand how the degree
distribution influences correlations between the centrality
metrics. (ii) We further explore correlations in 34 real-world networks
with differing numbers of nodes and links. (iii) We theoretically
compare the Pearson correlation coefficients between the principal
eigenvector and the degree masses.

Recently there has been considerable interest in understanding how two
competing opinions \cite{galam2005local, castellano2009statistical,
  shao2009dynamic, li2013non, qu2014non} evolve in a population. In this
work we apply our centrality metrics to an inflexible contrarian opinion
(ICO) model \cite{li2011strategy} in which only two opinions (denoted
$A$ and $B$) exist, with the goal of helping one opinion (opinion $B$)
as it competes with with the other opinion (opinion $A$). At the initial
time, opinions are randomly assigned to all nodes (with a fraction $f$
of nodes holding opinion $A$ and a fraction $1-f$ of nodes holding
opinion $B$). At each step, each agent simultaneously and in parallel
adopts the opinion of the majority of its nearest neighbors and itself,
and if there is a tie, the agent does not change its opinion. After the
system reaches a steady state, a fraction $p_{o}$ of agents with opinion
$A$ is placed among the inflexible contrarians permanently holding
opinion $B$, which can affect the opinion of their nearest neighbors. It
is known that the size of the giant component of agents with opinion $A$
can be decreased or even destroyed by the inflexible contrarians
\cite{li2011strategy}. Li \textit{et al.} \cite{li2011strategy} have
selected the inflexible contrarians in ER and SF networks either
randomly or based on degree. Here we choose inflexible contrarians using
all the centrality metrics we have considered in both modelled networks
and real-world networks. We compare the efficiencies of these centrality
metrics in reducing the size of the largest opinion $A$ cluster and find
that strongly correlated centrality metrics have approximately the same
efficiency in both modelled networks and real-world networks. Thus a
high-complexity centrality metric could be approximated by a strongly
correlated low-complexity centrality metric.

This paper is organized as follows. In Sec.~\ref{centraltiy_metrics} we
introduce the centrality metrics. In Sec.~\ref{relation} we study the
Pearson correlation coefficient and the centrality similarity between any two
centrality metrics in both network models and real-world networks. In 
Sec.~\ref{Sec_theoretical_analysis} the Pearson correlations between the degree
masses and the principal eigenvector are theoretically analysed. In 
Sec.~\ref{opinionmodel} the centrality metrics are applied in choosing the
inflexible contrarians in the ICO model and the efficiencies of the
centrality metrics are compared.

\section{Definition of network centrality metrics}
\label{centraltiy_metrics}

Centrality metrics quantify node properties in a network. Here we first
review some centrality metrics that are widely used or have been
recently proposed \cite{comin2011identifying, borgatti2005centrality,
  freeman1979centrality, Friedkin1991theoretical, mullen1991effects,
  newman2008mathematics, kitsak2010identification, joyce2010new,
  van2014performance}. We then propose a new centrality metric, which we
call {\it degree mass}. Let $G(\mathcal{N}$, $\mathcal{L})$ be a
network, where $\mathcal{N}$ is the set of nodes and $\mathcal{L}$ is
the set of links. The number of nodes is denoted by $N=|\mathcal{N}|$
and the number of links by $L=|\mathcal{L}|$. The network $G$ can be
represented by an $N\times N$ symmetric adjacency matrix $A$, consisting
of elements $a_{ij}$, which are either one or zero depending on whether
node $i$ is connected to node $j$ or not. The networks mentioned in this
paper are simple, unweighted and do not have self-loops or multiple
links.

\begin{itemize}
\item Principal eigenvector $x_{1}$
\end{itemize}

The largest eigenvalue of the adjacency matrix $A$ is $\lambda_{1}$,
also called the spectral radius \cite{VanMieghem2011Graph_spectra}. The
principal eigenvector $x_{1}$\ corresponding to the spectral radius
$\lambda_{1}$ satisfies the eigenvalue equation
\[
Ax_{1}=\lambda_{1}x_{1}.
\]
Component $j$ of the principal eigenvector is denoted by
$(x_{1})_{j}$. The $X_{1}$ is the element in the principal eigenvector
that corresponds to a random node.

\begin{itemize}
\item Betweenness $B_{n}$
\end{itemize}

Betweenness was introduced independently by Anthonisse
\cite{anthonisse1971rush} in 1971 and Freeman
\cite{freeman1979centrality} in 1977. The betweenness of a node $i$ is
the number of shortest paths between all possible pairs of nodes in the
network that traverse the node
\[
b_{ni}=\sum_{s\neq i\neq d\in\mathcal{N}}\frac{\sigma_{sd}(i)}{\sigma_{sd}},
\]
where $\sigma_{sd}(i)$ is the number of shortest paths that pass through
node $i$ from node $s$ to node $d$, and $\sigma_{sd}$ is the total
number of shortest paths from node $s$ to node $d$. The betweenness
$B_{n}$ incorporates global information and is a simplified quantity for
assessing the traffic carried by a node. Assuming that a unit packet is
transmitted between each node pair, the betweenness $b_{ni}$ is the
total number of packets passing through node $i$
\cite{wang2008betweenness}.

\begin{itemize}
\item Closeness $C_{n}$
\end{itemize}

The closeness \cite{koschutzki2005centrality} of a node $i$ is the
average hopcount of the shortest paths from node $i$ to all other
nodes. It measures how close a node is to all the others. The most
commonly used definition is the reciprocal of the total hopcount,
\[
c_{ni}=\frac{N-1}{\sum_{j\in\mathcal{N}\setminus\{i\}}H_{ij}},
\]
where $H_{ij}$ is the hopcount of the shortest path between nodes $i$
and $j$, and $\sum_{j\in\mathcal{N}\setminus\{i\}}H_{ij}$ is the sum of
the hopcount of the shortest paths from node $i$ to all other
nodes. Closeness has been used to identify central metabolites in
metabolic networks \cite{ma2003connectivity}.

\begin{itemize}
\item $K$-shell index $K_{s}$
\end{itemize}

The $k$-shell decomposition of a network allows us to identify the core
and the periphery of the network. The $k$-shell decomposition proceedure
is as follows:

\begin{itemize}

\item[{(1)}] Remove all nodes of degree $d=1$ and also their links. This
  may reduce the degree of other nodes to 1.

\item[{(2)}] Remove nodes whose degree has been reduced to 1 and their
  links until all of the remaining nodes have a degree $d>1$. All of the
  removed nodes and the links between them constitute the $k$-shell with
  an index $k_{s}=1$.

\item[{(3)}] Remove nodes with degree $d=2$ and their links in the
  remaining networks until all of the remaining nodes have a degree
  $d>2$. The newly removed nodes and the links between them constitute
  the k-shell with an index $k_{s}=2$, and subsequently for higher
  values of $k_{s}$.

\end{itemize}

\noindent
The $k$-shell is a variant of the $k$-core \cite{seidman1983network,
  pittel1996sudden}, which is the largest subgraph with minimum degree
of at least $k$. A $k$-core includes all $k$-shells with an index of
$k_{s}=0,1,2,\cdots, k$. An $O(m)$ algorithm for $k$-shell network
decomposition was proposed in Ref.~\cite{batagelj2003m}. The $k$-shell
index of the original infected node is a better predictor of the
infected population in the susceptible-infectious-recovered (SIR)
epidemic spreading process than other centrality metrics, such as the
degree \cite{kitsak2010identification}.

\begin{itemize}
\item Leverage $L_{n}$
\end{itemize}

Joyce \textit{et al.} \cite{joyce2010new} introduced leverage centrality
in order to identify neighborhood hubs in functional brain networks. The
leverage measures the extent of the connectivity of a node relative to
the connectivity of its nearest neighbors. The leverage of a node $i$ is
defined
\[
l_{ni}=\frac{1}{d_{i}}\sum_{j\in\mathcal{N}_{i}}\frac{d_{i}-d_{j}}{d_{i}
+d_{j}},
\]
where $\mathcal{N}_{i}$ is the directly connected neighbors of the node
$i$. With the definition of $l_{ni}$ and the range $[1,N-1]$ of the
degree $d_{i}$ in connected networks, the leverage of a node $i$ is
bounded by $-1+\frac{2d_{i}}{d_{i}+(N-1)}\leq
l_{ni}\leq1-\frac{2}{d_{i}+1}$. Hence the range of the leverage $l_{ni}$
is $[-1+2/N, 1-2/N]$ and the equality occurs in star graphs and complete
graphs $K_{N}$. The leverage of a node is high when it has more
connections than its direct neighbors. Thus a high-degree node with
high-degree nearest neighbors will probably have a low leverage.

\begin{itemize}
\item Degree mass $D^{(m)}$
\end{itemize}

The degree of a node $i$ in a network $G$ is the number of its direct
neighbors,
\[
d_{i}=\sum_{j=1}^{N}a_{ij}=(Au)_{i},
\]
where $u=(1,1,\cdots,1)^{T}$ is the all-one vector. Here we propose a
new set of centrality metrics, the degree mass, which is a variant of
degree centrality. The $m$th-order degree mass of a node $i$ is defined
as the sum of the weighted degree of its $m$-hop
neighborhood,
\[
d_{i}^{(m)}=\sum_{k=1}^{m+1}\left(  A^{k}u\right)_{i}=
\sum_{j=1}^{N}\left(\sum_{k=0}^{m}A^{k}\right)_{ij}d_{j}, 
\]
where $m\geq0$. The weight of the degree $d_{j}$ is the number
of walks\footnote{A walk from $i$ to $j$ is any sequence of edges that
  allows back and forth movement and repeated visits to the same node.}
of length no longer than $m$ from node $i$ to node $j$. The weight of
$d_{j}$ is larger than the weight of $d_{l}$ when node $l$ is farther
than node $j$ from node $i$. The $m$th-order degree mass vector is
defined $d^{(m)}=[d_{1}^{(m)},d_{2}^{(m)},\cdots,d_{N}^{(m)}]$. The
0th-order degree mass is the degree centrality. The 1st-order degree
mass of node $i$ is the sum of the degree of node $i$ and the degree of
its nearest neighbors. When $m$ is large, the $m$th-order degree mass is
proportional to the principal eigenvector.

\section{Correlations between centrality metrics}
\label{relation}

We investigate the correlations between the centrality metrics
introduced in Sec.~\ref{centraltiy_metrics}, in both network models and
real-world networks.  The network models include the Erd\H{o}s-R\'{e}nyi
(ER) network and the scale-free (SF) network. ER networks are
characterized by a binomial degree distribution with
$\mathrm{Prob}\left[D=k\right]=\binom{N-1}{k}p^k(1-p)^{N-1-k}$, where
$N$ is the number of nodes and $p$ is the probability that each node
pair is connected. A SF network \cite{barabasi1999emergence,
  cohen2000resilience} has a power-law degree distribution with
\textrm{Prob}$[D=k]\sim k^{-\alpha}$, $k\in \left[k_{\rm min}, k_{\rm
    max}\right]$, where $k_{\rm min}$ is the smallest degree, $k_{\rm
  max}$ is the degree cutoff, and $\alpha$ is the exponent
characterizing the broadness of the distribution. In this work we use
the natural cutoff at approximately $N^{1/(\alpha-1)}$ and $k_{\rm min}
= 2$. We consider 34 real-world networks, e.g., airline connections,
electrical power grids, and coauthorship collaborations. The
descriptions and properties of these real-world networks are given in
Appendix $A$. We study the correlations between any two centrality
metrics using the Pearson correlation coefficient and the centrality
similarity.

\subsection{Pearson correlation coefficients between centrality metrics}
\label{Linearcorrelation}

Here we explore the linear correlation between the centrality metrics
using numerical simulations in both ER and SF networks as well as in
real-world networks. The results in Appendix $B$ indicate that strong
linear correlations do exist between certain centrality metrics in both
ER and SF networks, and that network size has little influence on the
correlations. Note that the $k$-shell index is weakly correlated with
all the other centrality metrics. This might be the case because the
$k$-shell indices of all nodes are similar to each other in binomial
networks. We note the following seemingly universal relations between
the degree masses and three centrality metrics, the principal
eigenvector $x_{1}$, the closeness $C_{n}$ and the betweenness $B_{n}$,
as
\[
\begin{cases}
\rho(X_{1},D^{(2)})>\rho(X_{1},D^{(1)})>\rho(X_{1},D), \\
\rho(C_{n},D^{(1)})>\rho(C_{n},D^{(2)})>\rho(C_{n},D),\\
\rho(B_{n},D)>\rho(B_{n},D^{(2)})>\rho(B_{n},D^{(1)}),\\
\end{cases}
\]
in most real-world networks (see Figs.~\ref{eig}, \ref{clos}, and
\ref{betw}). The same results can be found in both ER and SF networks
(see Appendix $B$). We theoretically prove the inequality
$\rho(X_{1},D^{(2)})>\rho(X_{1},D^{(1)})>\rho(X_{1},D)$ in ER networks
in Sec.~\ref{Sec_theoretical_analysis}.

\begin{figure}[H]
\centering
\subfloat[]{\label{eig}
\includegraphics[width=0.33\textwidth]{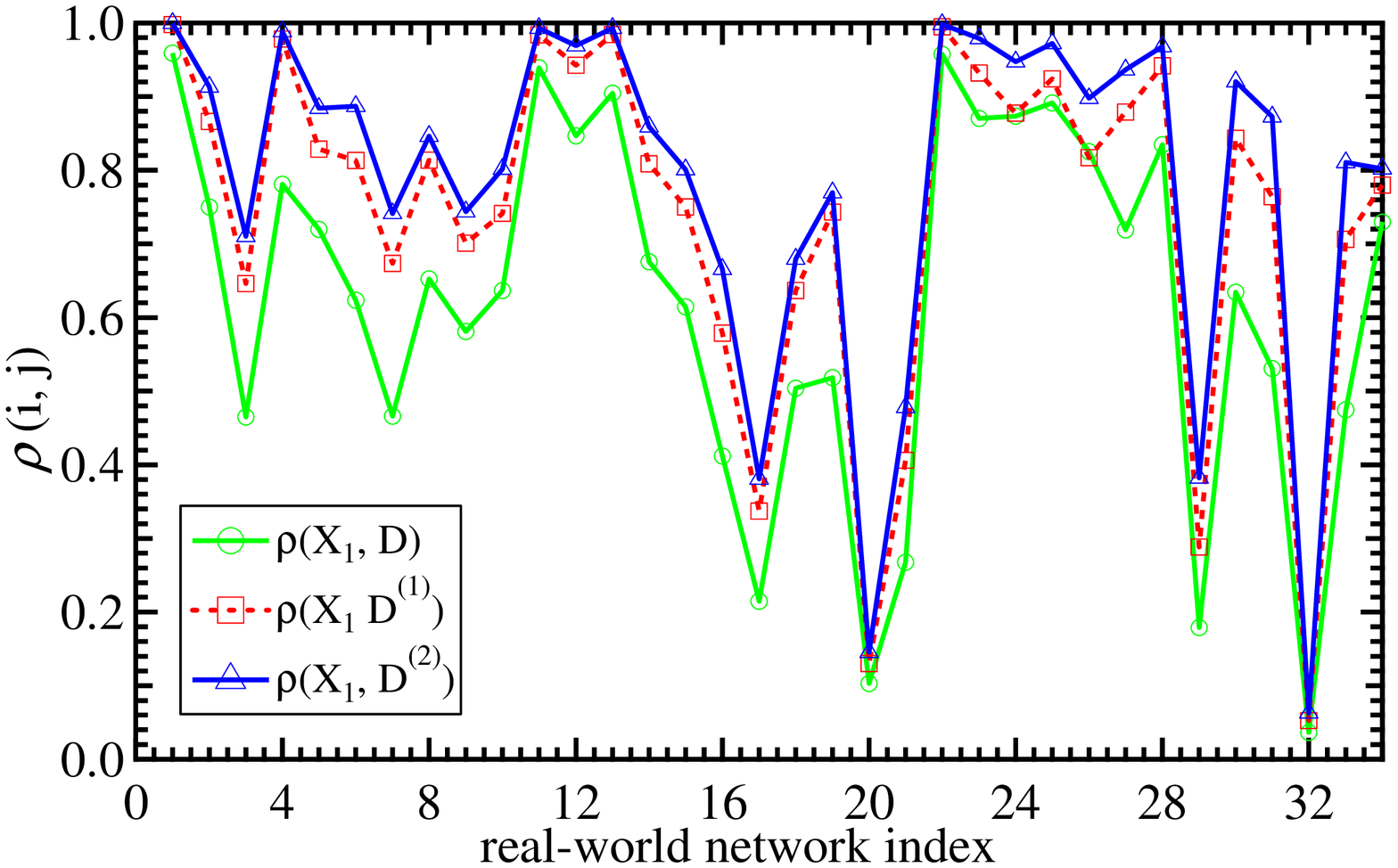}}
\subfloat[]{\label{clos}
\includegraphics[width=0.33\textwidth]{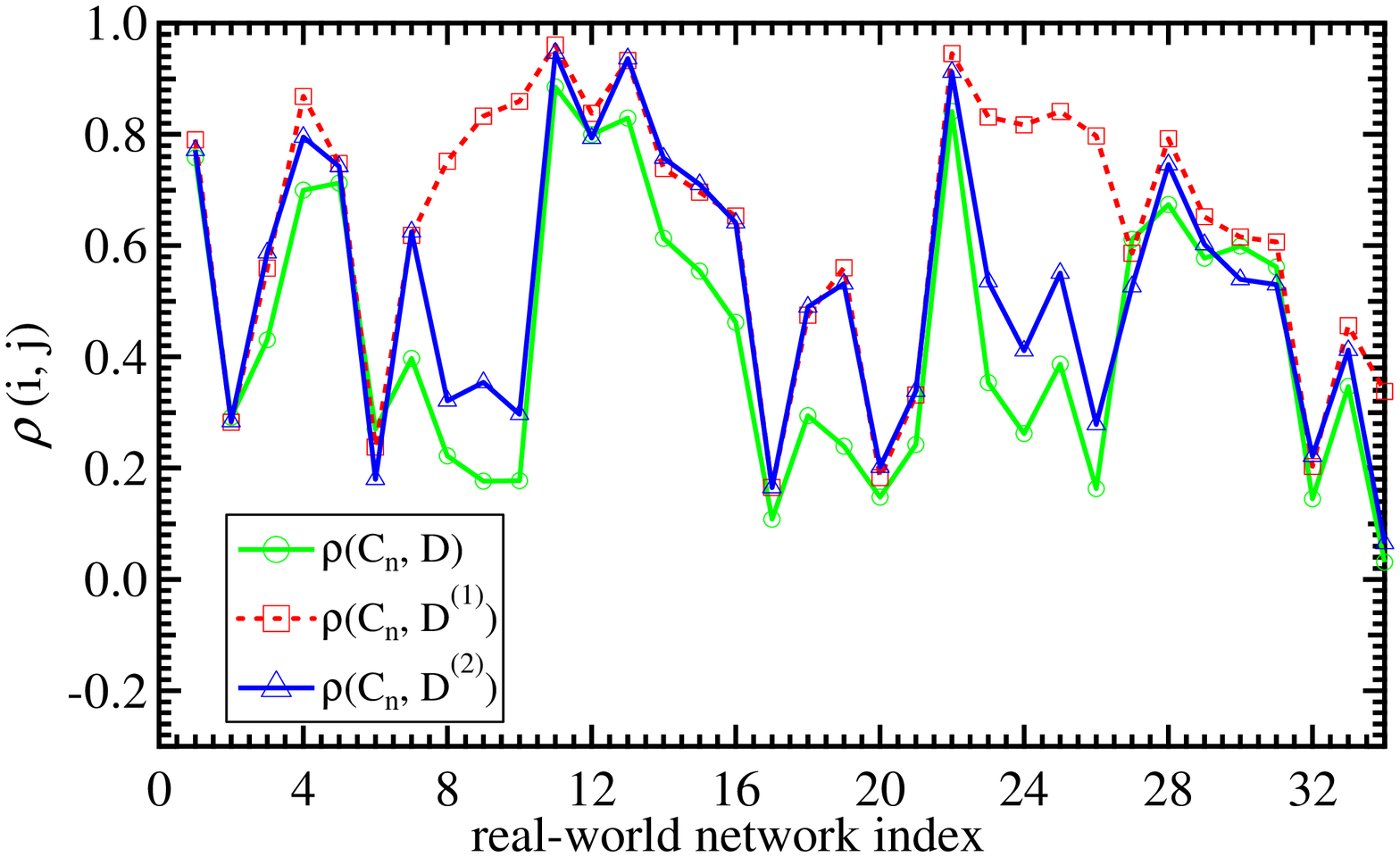}}
\subfloat[]{\label{betw}
\includegraphics[width=0.33\textwidth]{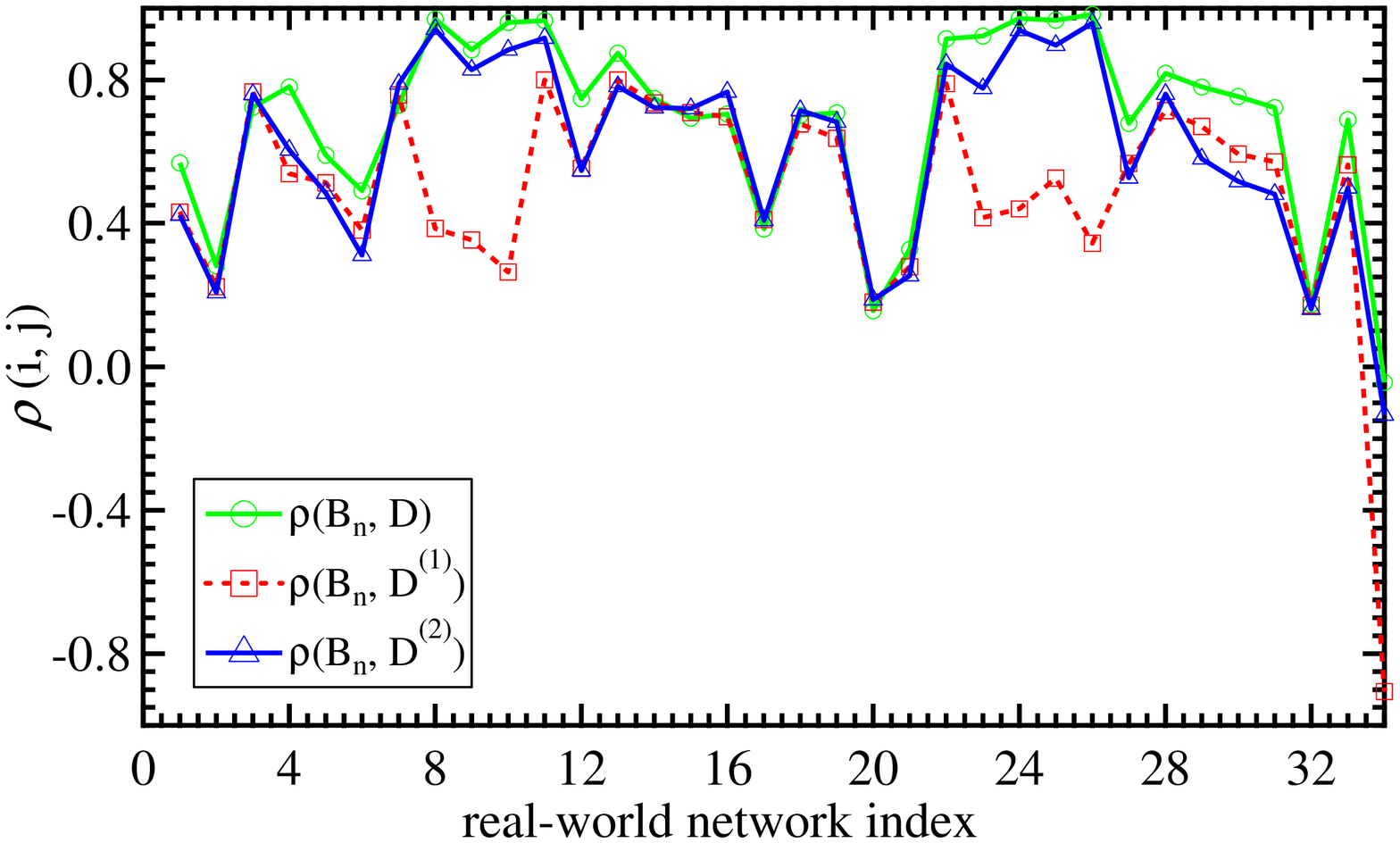}}
%%%\subfloat[]{\label{eig}
%%%\includegraphics[width=0.6\textwidth]{eigenvector.eps}}\\
%%%\subfloat[]{\label{clos}
%%%\includegraphics[width=0.6\textwidth]{Closeness.eps}}\\
%%%\subfloat[]{\label{betw}
%%%\includegraphics[width=0.6\textwidth]{Betweenness.eps}}
\bigskip
\caption{Pearson correlation coefficients (a) between the principal
  eigenvector and the degree masses: $\rho(X_{1},D)$ (in circle marks),
  $\rho(X_{1},D^{(1)})$ (in rectangle marks), and $\rho(X_{1},D^{(2)})$
  (in\ triangle marks); (b) between the closeness and the degree masses:
  $\rho(C_{n},D)$ (in circle marks), $\rho(C_{n},D^{(1)})$ (in rectangle
  marks), and $\rho(C_{n},D^{(2)})$ (in\ triangle marks); (c) between
  betweenness and degree masses: $\rho(B_{n},D)$ (in circle marks),
  $\rho(B_{n},D^{(1)})$ (in rectangle marks), and $\rho(B_{n},D^{(2)})$
  (in\ triangle marks), in 34 real-world networks.}
\label{correlation_figure1}
\end{figure}

Almost all of the Pearson correlation coefficients
$\rho(X_{1},D^{(2)})$, $\rho(C_{n},D^{(1)})$, and $\rho(B_{n},D)$ are
large ($>0.95$) in both ER and SF networks (see Figs.~\ref{ER3} and
\ref{SF}) and are also large ($>0.6$) in most real-world networks (see
Fig.~\ref{correlation_figure1}). The betweenness of a power-law
distributed network also follows a power-law distribution
\cite{joy2005high}. This supports the strong linear correlation between
the betweenness $B_{n}$ and the degree $D$ in SF networks
\cite{estrada2007characterization}.

\subsection{Centrality similarities $M_{A,B}(\Upsilon)$ between centrality metrics}

Different centrality metrics rank the nodes in different orders within a
network. The centrality similarity was proposed in
Ref.~\cite{trajanovski2013robustness} to quantify the similarity of
centrality metrics in ranking nodes.

\textbf{Definition} In a graph $G(N,L)$ assume we obtain two node
rankings, $[a_{(1)},a_{(2)},\cdots,a_{(N)}]$ and
$[b_{(1)},b_{(2)},\cdots,$\ $b_{(N)}]$, according to centrality metrics
$A$ and $B$, where $a_{(j)}$ or $b_{(j)}$ is the node whose centrality
metric $A$ or $B$ is the $j$-th largest in the networks. The centrality
similarity $M_{A,B} (\Upsilon)$ is the percentage of the nodes in
$[a_{(1)} ,a_{(2)},\cdots\cdots,a_{(\Upsilon N)}]$, which are also in
$[b_{(1)},b_{(2)},\cdots\cdots,b_{(\Upsilon N)}]$, where
$\Upsilon\in\lbrack0,1]$.

The measure $M_{A,B}(\Upsilon)$ gives the percentage of overlapping
nodes from the top $100\Upsilon\%$ of nodes, ranked by the centrality
metrics $A$ and $B$, respectively. The range of $M_{A,B}(\Upsilon)$ is
between $[0,1]$. If the $100\Upsilon\%$ of nodes chosen by centrality
metric $A$ are not at all in the $100\Upsilon\%$ of nodes chosen by
centrality metric $B$, $M_{A,B}(\Upsilon)=0$. It means that the most
important (top $100\Upsilon\%$) nodes chosen by the two centrality
metrics are completely different, i.e., the centrality metrics $A$ and
$B$ differ greatly. When all nodes are chosen ($\Upsilon=1$) there is a
full overlap, which indicates that $M_{A,B}(1)=1$. For a given
$\Upsilon<1$, a larger $M_{A,B}(\Upsilon )$ represents a stronger
correlation between the two centrality metrics $A$ and $B$.

\subsubsection{Centrality similarities in network models}

We study the centrality similarity $M_{A,B}(\Upsilon)$ between any two
centrality metrics\footnote{Our study shows that the centrality
  similarity $M_{A,B}(\Upsilon)$ increases with the increase of
  $\Upsilon$ in ER networks, but decreases with the increase of
  $\Upsilon$ in SF networks. Note that this observation holds only for
  small $\Upsilon$ and, if $\Upsilon$ is around $1$,
  $M_{A,B}(\Upsilon)=1$ in all networks.} in $10^{3}$ network
realizations of ER networks and SF networks with $N=10^{4}$ and
$\Upsilon=[0.001$, $0.01$, $0.1]$.

\begin{figure}[h]
\centering
%%%\subfloat[]{
%%%\includegraphics[width=0.455\textwidth]{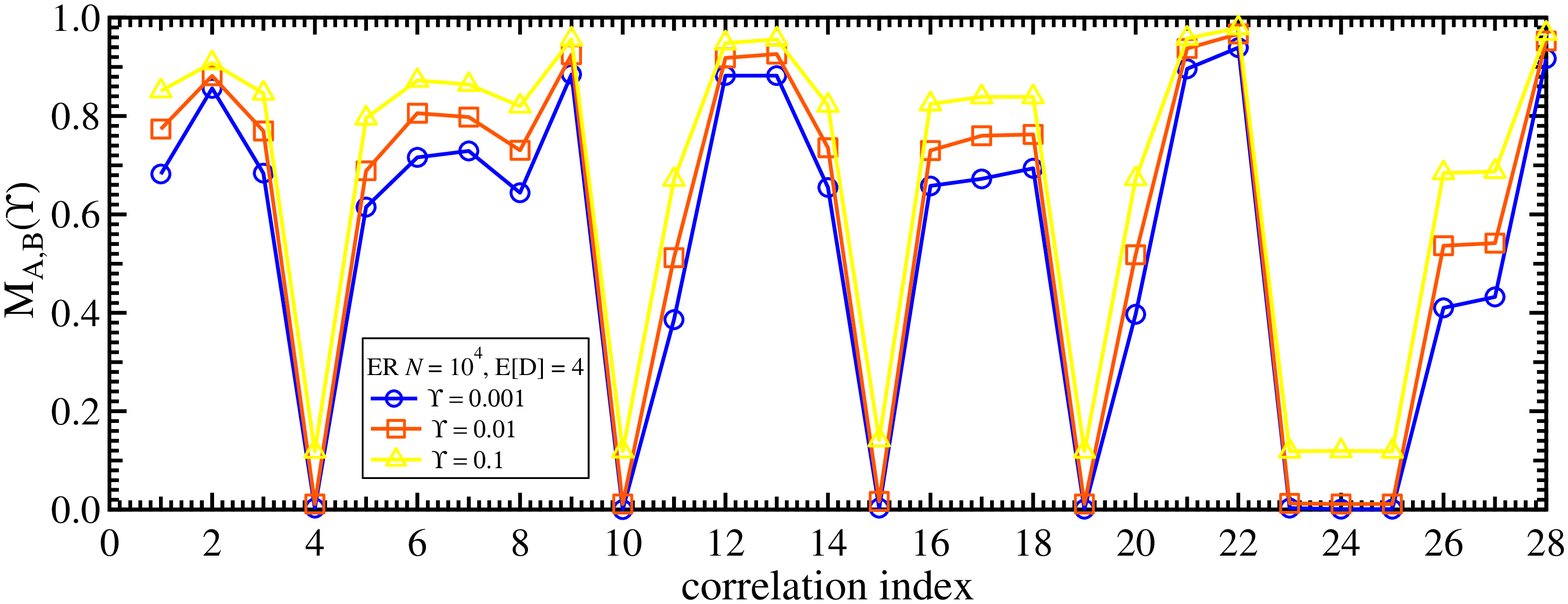}}
%%%\subfloat[]{
%%%\includegraphics[width=0.45\textwidth]{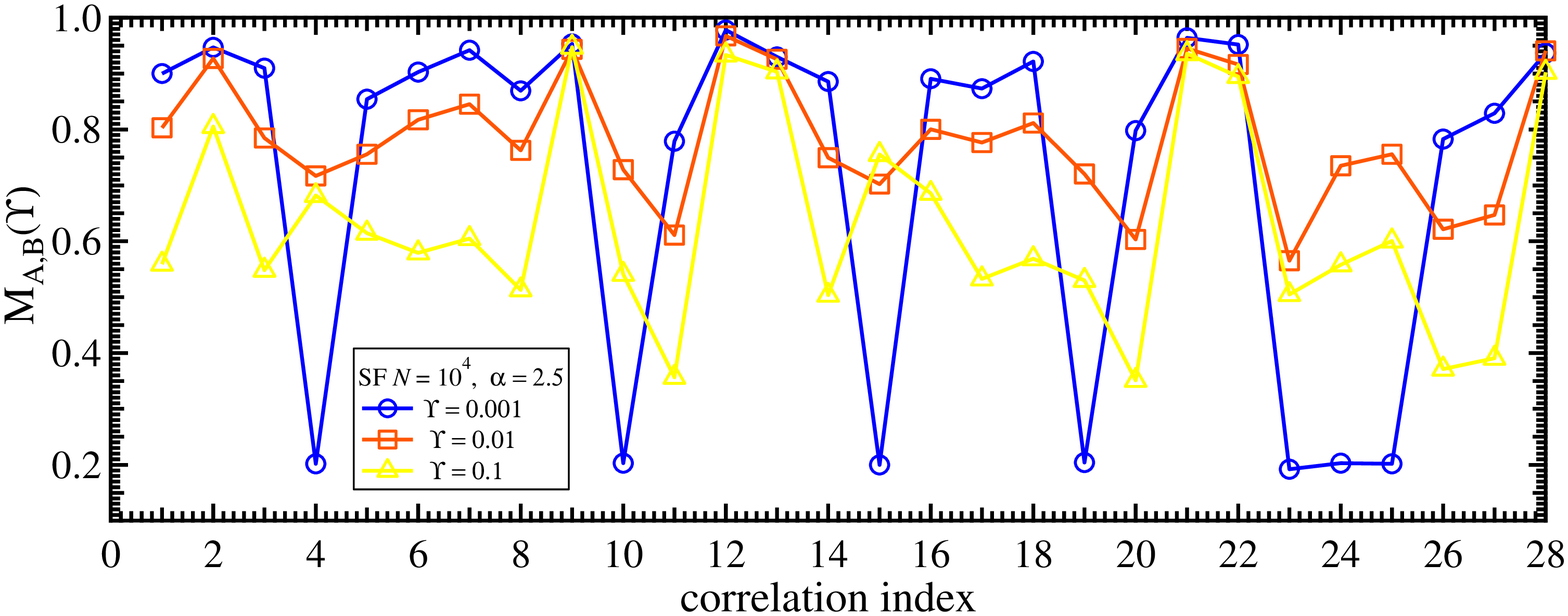}}

\subfloat[]{
\includegraphics[width=0.65\textwidth]{MER10000.eps}}\\
\subfloat[]{
\includegraphics[width=0.65\textwidth]{MSF10000.eps}}
\bigskip
\caption{Centrality similarities between centrality metrics in network
  models: (a) for ER networks and (b) for SF networks. The x-axis is the
  correlation index (see Appendix
  \ref{Appen_Pearson}).}\label{opinion_figure0}
\end{figure}

We observe that in both ER and SF networks, the $M_{B_{n},D}(\Upsilon)$
is notably larger than the centrality similarity between $B_{n}$\ and
any other centrality metric;
$M_{C_{n},D^{(1)}}(\Upsilon)>M_{C_{n},D^{(2)}}(\Upsilon)>M_{C_{n},D}(\Upsilon)$;
and the centrality similarities $M_{x_{1},D^{(1)}}(\Upsilon)$ and
$M_{x_{1},D^{(2)}}(\Upsilon)$ are both large. In ER networks,
$M_{x_{1},D^{(2)}}(\Upsilon)>M_{x_{1},D^{(1)}}(\Upsilon)>M_{x_{1},D}(\Upsilon)$. The
$k$-shell index has low similarity with other metrics in ER networks for
the same reason mentioned in Sec.~\ref{Linearcorrelation}. All these
observations agree with what we have found using the Pearson correlation
coefficients in Sec.~\ref{Linearcorrelation}.

\subsubsection{Centrality similarities in real-world networks}

For the 34 real-world networks the percentage $\Upsilon$ should be
larger than 3\%, since the smallest network only has 35 nodes. We
compare the similarity between each centrality metric (e.g., $B_{n}$)
and all other metrics to determine which metric is the closest to the
centrality metric (e.g., $B_{n}$). In Fig.~\ref{barfig} the height of
each bar indicates the number of networks in which $M_{A,B}(\Upsilon)$
is the highest among the centrality similarities between $A$ and all the
other centrality metrics. The bar chart shows that the $D$, $D^{(1)}$,
and $D^{(2)}$ are, respectively, most similar to $B_{n}$, $C_{n}$, and
$x_{1}$ in most real-world networks, which is consistent with what is
observed in the network models. We also observe that either
$M_{L_{n},D}(\Upsilon)$ or\ $M_{L_{n},B_{n}}(\Upsilon )$ is the largest
among the centrality similarities between $L_{n}$ and all other metrics
in most real-world networks.

\begin{figure}[ptb]%
\centering
\includegraphics[height=2.435in,width=4.426in]{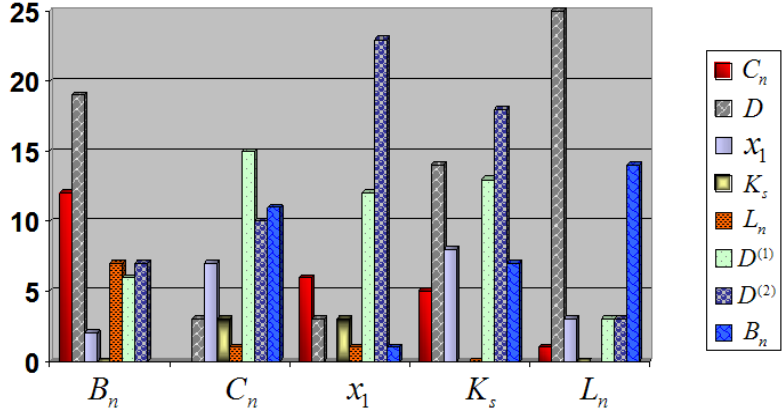}
\caption{Number of networks (among the 34 real-world networks) in which
  $M_{A,B}(\Upsilon)$ is the highest among the centrality similarities
  between $A$ and all other centrality metrics, when $\Upsilon=5\%$. The
  centrality metric $A$ is given by the $x$-axis label, and $B$ is
  reflected by the pattern described in the box on right side. Take the
  betweenness $B_{n}$ as an example. The centrality similarities between
  $B_{n}$\ and all the other metrics are compared with each other to
  find the largest similarity in each real-world network. For instance,
  the $M_{B_{n},C_{n}}(\Upsilon)$ is the largest centrality similarity
  in `Electric\_s208' network, so that one is counted into the leftmost
  bar of $B_{n}$ (with $C_{n}$).}
\label{barfig}
\end{figure}

\section{Theoretical analysis}
\label{Sec_theoretical_analysis}

The above simulations indicate that the three lowest-order degree
masses, with a low computational complexity, are strongly correlated
with the betweenness, the closeness, and the components of the principal
eigenvector, all of which are complex to compute. We first prove that
the high-order ($m\rightarrow\infty$) degree mass is proportional to the
principal eigenvector $x_{1}$ in any network. Next we prove that when
$m$ is small the correlation between degree mass and the principal
eigenvector increases with an increase in $m$, i.e., $\rho(X_{1},
D^{(2)}) \geq \rho(X_{1}, D^{(1)})\geq\rho(X_{1},D)$. We then apply the
generating function method \cite{VanMieghem2006performance,
  newman2001random} to analyze such statistical properties of the degree
masses as expectation and variance (see Appendix~\ref{proof_app}).

%\begin{lemma}
\begin{theorem}
\label{lemma_proportional}
The $m$th-order degree mass vector $d^{(m)}$ is proportional to the
principal eigenvector $x_{1}$ in any network with a sufficiently large spectral gap when $m\rightarrow\infty$.
\end{theorem}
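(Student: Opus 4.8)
The plan is to diagonalise $A$ and track the eigen-decomposition of $d^{(m)}$ as $m\to\infty$. Because $A$ is real symmetric it has an orthonormal eigenbasis $x_{1},x_{2},\dots ,x_{N}$ with real eigenvalues $\lambda_{1}\ge\lambda_{2}\ge\cdots\ge\lambda_{N}$. For a connected network the Perron--Frobenius theorem gives that $\lambda_{1}$ is simple and that $x_{1}$ may be taken with all entries positive; the ``sufficiently large spectral gap'' hypothesis is used precisely to guarantee the strict dominance $|\lambda_{j}|<\lambda_{1}$ for every $j\ge 2$ (in particular it rules out bipartite graphs, where $\lambda_{N}=-\lambda_{1}$). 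Write the all-one vector as $u=\sum_{j=1}^{N}c_{j}x_{j}$ with $c_{j}=x_{j}^{T}u$, and note $c_{1}=x_{1}^{T}u>0$.

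First I would turn the definition $d^{(m)}=\sum_{k=1}^{m+1}A^{k}u$ into a closed form via $A^{k}x_{j}=\lambda_{j}^{k}x_{j}$ and the geometric series:
\[
d^{(m)}=\sum_{j=1}^{N}c_{j}\Big(\sum_{k=1}^{m+1}\lambda_{j}^{k}\Big)x_{j}
=\sum_{j=1}^{N}c_{j}\,\lambda_{j}\,\frac{\lambda_{j}^{m+1}-1}{\lambda_{j}-1}\,x_{j},
\]
reading the degenerate terms $\lambda_{j}\in\{0,1\}$ off directly (they contribute $0$, or $c_{j}(m+1)x_{j}$, respectively). The $j=1$ term has norm of order $c_{1}\lambda_{1}^{m+2}/(\lambda_{1}-1)$, where $\lambda_{1}>1$ because $\lambda_{1}\ge 2L/N>1$ for every connected graph other than $K_{2}$, and $K_{2}$ is bipartite hence already excluded.

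Then I would divide through by the scalar $g_{m}:=c_{1}\lambda_{1}\dfrac{\lambda_{1}^{m+1}-1}{\lambda_{1}-1}$ and bound the remaining terms: for $j\ge 2$,
\[
\frac{1}{g_{m}}\Big|c_{j}\lambda_{j}\frac{\lambda_{j}^{m+1}-1}{\lambda_{j}-1}\Big|
\le\frac{|c_{j}|}{c_{1}}\cdot\frac{|\lambda_{j}|}{\lambda_{1}}\cdot\frac{\lambda_{1}-1}{|\lambda_{j}-1|}\cdot\frac{|\lambda_{j}|^{m+1}+1}{\lambda_{1}^{m+1}-1},
\]
and since $|\lambda_{j}|/\lambda_{1}<1$ and $\lambda_{1}>1$ the last fraction tends to $0$, so the whole bound tends to $0$. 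Hence $d^{(m)}/g_{m}\to x_{1}$, equivalently $d^{(m)}/\|d^{(m)}\|\to x_{1}$ with a positive proportionality constant, and the convergence rate is $O\big((\lambda_{2}^{\ast}/\lambda_{1})^{m}\big)$ with $\lambda_{2}^{\ast}=\max_{j\ge 2}|\lambda_{j}|$ --- which also makes explicit the role of the gap size.

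I expect the only genuine obstacle to be the strict magnitude-dominance $|\lambda_{j}|<\lambda_{1}$: without it (e.g.\ for bipartite graphs) $d^{(m)}/\|d^{(m)}\|$ oscillates rather than converging, so the gap hypothesis is really needed and not merely technical convenience. Everything else is routine manipulation of geometric sums, the one minor care being the separate treatment of a possible eigenvalue $\lambda_{j}=1$, for which the closed-form ratio degenerates but the term is still exponentially dominated by the principal one.
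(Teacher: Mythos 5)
Your proposal is correct and follows essentially the same route as the paper: spectral decomposition of $A$, expansion of $u$ in the eigenbasis, summation of the geometric series, and dominance of the $j=1$ term as $m\to\infty$. The extra care you take with the degenerate eigenvalues $\lambda_{j}\in\{0,1\}$ and with excluding bipartite graphs (where $|\lambda_{N}|=\lambda_{1}$) is a welcome refinement that the paper's own proof glosses over, but it does not change the underlying argument.
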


\begin{proof}

The $m$th-order degree mass vector $d^{(m)}$ is

\begin{align*}
d^{(m)} & =\sum\limits_{k=1}^{m+1}\left(A^{k}u\right)={\displaystyle\sum\limits_{k=1}^{m+1}}{\displaystyle\sum\limits_{j=1}^{N}}\lambda_{j}^{k}x_{j}\left(x_{j}^{T}u\right)\\
&  ={\displaystyle\sum\limits_{j=1}^{N}}\left(\lambda_{j}\frac{\lambda_{j}^{m+1}-1}{\lambda_{j}-1}\right)\left(x_{j}^{T}u\right)x_{j} \\ & =\left(\lambda_{1}\frac{\lambda_{1}^{m+1}-1}{\lambda_{1}-1}\right)\left(x_{1}^{T}u\right)x_{1}+{\displaystyle\sum\limits_{j=2}^{N}}\left(\lambda_{j}\frac{\lambda_{j}^{m+1}-1}{\lambda_{j}-1}\right)\left(x_{j}^{T}u\right)x_{j} \\& =\left(\lambda_{1}\frac{\lambda_{1}^{m+1}-1}{\lambda_{1}-1}\right)\left(x_{1}^{T}u\right)x_{1}\left(1+O\left(\sum_{j=2}^{N}\left(\frac{\left|\lambda_{j}\right|}{\left|\lambda_{1}\right|}\right)^{m}\right)\right).
\end{align*}

Literature \cite{VanMieghem2011Graph_spectra} has proved that $x_{1}^{T}u>x_{j}^{T}u$ for all $1<j\leq N$. Accordingly, the term
${\displaystyle\sum\limits_{j=2}^{N}}
\left(\lambda_{j}\frac{\lambda_{j}^{m+1}-1}{\lambda_{j}-1}\right)\left(x_{j}^{T}u\right)x_{j}$
is small in the graphs with a large spectral gap
$(\lambda_{1}-\lambda_{2})$. When $m$ increases,
$d^{(m)}\rightarrow\left(\lambda_{1}\frac{\lambda_{1}^{m+1}-1}{\lambda_{1}-1}\right)\left(x_{1}^{T}u\right)x_{1}$. Moreover,
when $m$ is large, especially when $m\rightarrow\infty$,
$O\left(\sum_{j=2}^{N}\left(\frac{\left|\lambda_{j}\right|}{\left|\lambda_{1}\right|}\right)^{m}\right)\rightarrow
0$ in any graph. Thus we find that $d^{(m)}$ tends to be
proportional to $x_{1}$ when $m$ increases in networks with a large
spectral gap, and $d^{(m)}\sim\lambda_{1}^{(m+1)}(x_{1})$ in
networks when $m\rightarrow\infty$.
\end{proof}

\begin{lemma}
\label{lemma12mass}
In large sparse Erd\H{o}s-R\'{e}nyi (ER) networks, $\rho(D^{(2)},X_{1}
)\geq\rho(D^{(1)},X_{1})\geq\rho(D,X_{1})$.
\end{lemma}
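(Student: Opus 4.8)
The plan is to reduce $\rho(D^{(m)},X_{1})$ to quantities that the generating-function method can evaluate in a large sparse Erd\H{o}s--R\'enyi graph (write $c$ for the average degree). Two reductions do the work. First, for vectors $v,w$ the Pearson coefficient equals $v^{T}Jw/(\|Jv\|\,\|Jw\|)$ with the centering projection $J=I-\tfrac{1}{N}uu^{T}$; and by Theorem~\ref{lemma_proportional}, $Jx_{1}$ is proportional to $\lim_{M\to\infty}Jd^{(M)}$. Second, using the spectral expansion $d^{(m)}=\sum_{j}c_{j}^{(m)}x_{j}$ with $c_{j}^{(m)}=\lambda_{j}\frac{\lambda_{j}^{m+1}-1}{\lambda_{j}-1}(x_{j}^{T}u)$, every ingredient of $\rho(D^{(m)},X_{1})$ is built from the walk counts $N_{k}=u^{T}A^{k}u$ (via $u^{T}d^{(m)}=\sum_{k=1}^{m+1}N_{k}$ and $\|d^{(m)}\|^{2}=\sum_{k,l=1}^{m+1}N_{k+l}$), from $\lambda_{1}$, and from $x_{1}^{T}u$: indeed $\|Jd^{(m)}\|^{2}=\|d^{(m)}\|^{2}-\tfrac{1}{N}(u^{T}d^{(m)})^{2}$, $\|Jx_{1}\|^{2}=1-\tfrac{1}{N}(x_{1}^{T}u)^{2}$, and the numerator couples $c_{1}^{(m)}$ with $\sum_{j\ge2}c_{j}^{(m)}(x_{j}^{T}u)=u^{T}d^{(m)}-c_{1}^{(m)}(x_{1}^{T}u)$.

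I would then feed in the ER statistics of Appendix~\ref{proof_app}: the generating functions give $\mathbb{E}[N_{k}]$ (leading order $\sim Nc^{k}$, the subleading term being the one that matters), $\lambda_{1}\approx c+1$, and $x_{1}\approx u/\sqrt{N}$, so $\tfrac{1}{N}(x_{1}^{T}u)^{2}=1-O(1/c)$ and $\|Jx_{1}\|\asymp1/\sqrt{c}$. To leading order in $1/c$ every $Jd^{(m)}$ and $Jx_{1}$ is proportional to $Jd$, so the three coefficients all equal $1-O(1/c)$ and the ordering lives in the next order. There it is governed by a Krylov ``staircase'': after orthogonalising $Ju,JAu,JA^{2}u,\dots$ into levels $w_{0}=Jd,w_{1},w_{2},\dots$ (with $w_{l}$ the $l$-hop fluctuation structure), $JA^{k}u=\sum_{l<k}(c+1)^{k-1-l}w_{l}$, so $Jd^{(m)}$ and $Jx_{1}$ both expand in the $w_{l}$, with $Jd^{(m)}$ truncated at level $m$ and with coefficients that increase monotonically toward those of $Jx_{1}$ as $m$ grows. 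Evaluating the cosine with the ER level sizes $\|w_{l}\|^{2}\asymp Nc^{l+1}$ gives $\rho(D^{(m)},X_{1})=(1+\varepsilon_{m})^{-1/2}$ with $\varepsilon_{m}\asymp c^{-(m+1)}$ strictly decreasing in $m$, hence $\rho(D^{(2)},X_{1})\ge\rho(D^{(1)},X_{1})\ge\rho(D,X_{1})$ for $c$ large.

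As a sanity check I would note that the \emph{uncentered} analogue is free from Theorem~\ref{lemma_proportional} alone: with $d^{(m)}=c_{1}^{(m)}x_{1}+s^{(m)}$, $s^{(m)}\perp x_{1}$, the cosine $\langle d^{(m)},x_{1}\rangle/(\|d^{(m)}\|\,\|x_{1}\|)$ equals $(1+R_{m})^{-1/2}$ with $R_{m}=\sum_{j\ge2}\bigl(\tfrac{x_{j}^{T}u}{x_{1}^{T}u}\bigr)^{2}\bigl(\tfrac{\lambda_{j}}{\lambda_{1}}\bigr)^{2}\bigl(\tfrac{\lambda_{1}-1}{\lambda_{j}-1}\bigr)^{2}\bigl(\tfrac{\lambda_{j}^{m+1}-1}{\lambda_{1}^{m+1}-1}\bigr)^{2}$ non-increasing in $m$, since $|\lambda_{j}|<\lambda_{1}$ for $j\ge2$ in a connected non-bipartite graph and each $\bigl|\tfrac{\lambda_{j}^{m+1}-1}{\lambda_{1}^{m+1}-1}\bigr|$ is non-increasing over $m\in\{0,1,2\}$.

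The main obstacle is that the Pearson coefficient, unlike the cosine just discussed, mean-subtracts, and in ER the subtracted mean ($\propto u$) is the dominant part of both $d^{(m)}$ and $x_{1}$; centering is therefore not a small perturbation but reprojects the whole comparison onto the fluctuation subspace, so one cannot read the ordering off the spectral decomposition. The real work is (i) to establish, via the ER generating functions, the near-orthogonality of the levels $w_{l}$ and their sizes $\|w_{l}\|^{2}\asymp Nc^{l+1}$ --- the combinatorial core, which requires controlling the correlations among the degrees met along a walk; and (ii) to upgrade Theorem~\ref{lemma_proportional} from a limit to a rate, i.e.\ $\|s^{(M)}\|/c_{1}^{(M)}$ uniformly small, which uses the ER spectral gap $\lambda_{1}-\lambda_{2}$ being of order $c$. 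The remaining points are routine: the sign bookkeeping in $|\lambda_{j}^{m+1}-1|/|\lambda_{1}^{m+1}-1|$ when $\lambda_{j}<0$ (settled by checking the three cases $m=0,1,2$ the lemma needs) and the smallness of the denominators $\|Jx_{1}\|\asymp1/\sqrt{c}$.
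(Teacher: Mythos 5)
Your plan identifies the right objects (the centered Pearson coefficient as a cosine, the walk counts $u^{T}A^{k}u$, the ER moments from Appendix~\ref{proof_app}), but it stalls exactly where you admit it does, and that is a genuine gap rather than routine bookkeeping. You make the entire ordering live in the ``Krylov staircase'': the near-orthogonality of the levels $w_{l}$, the sizes $\|w_{l}\|^{2}\asymp Nc^{l+1}$, and the resulting rate $\varepsilon_{m}\asymp c^{-(m+1)}$ are all asserted, not derived, and controlling the correlations among the degrees met along walks of length up to six (which is what $\|Jd^{(2)}\|^{2}$ requires) is precisely the hard combinatorial content of the lemma. The paper sidesteps all of this with one exact identity you did not exploit: since $x_{1}$ is an eigenvector and $A$ is symmetric, $\left(d^{(m)}\right)^{T}x_{1}=u^{T}\bigl(\sum_{k=1}^{m+1}A^{k}\bigr)x_{1}=\bigl(\sum_{k=1}^{m+1}\lambda_{1}^{k}\bigr)u^{T}x_{1}$, i.e.\ $E[D^{(m)}X_{1}]=E[X_{1}]\sum_{k=1}^{m+1}\lambda_{1}^{k}$ exactly, in any graph. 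This collapses the numerator of $\rho(D^{(m)},X_{1})$ to $\bigl(\sum_{k=1}^{m+1}\lambda_{1}^{k}-E[D^{(m)}]\bigr)E[X_{1}]$ as in Eq.~(\ref{correlationcoef}), so every $X_{1}$-dependent factor cancels in the ratios $\rho(D^{(1)},X_{1})/\rho(D,X_{1})$ and $\rho(D^{(2)},X_{1})/\rho(D^{(1)},X_{1})$; the only graph-specific input left is $E[D^{(m)}]$ and $Var[D^{(m)}]$, which the generating functions of Lemma~\ref{lemma_proof} give in closed form, together with $\lambda_{1}\rightarrow Np$ for sparse ER graphs. No orthogonalization of the fluctuation subspace and no rate-version of Theorem~\ref{lemma_proportional} are needed.

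A smaller but concrete error: in your sanity check, the claim that each $\bigl|\lambda_{j}^{m+1}-1\bigr|/\bigl|\lambda_{1}^{m+1}-1\bigr|$ is non-increasing over $m\in\{0,1,2\}$ fails for negative eigenvalues near $-1$ (take $\lambda_{j}=-1$: the $m=1$ term vanishes while the $m=2$ term does not), so even the uncentered ordering does not follow term by term from the spectral expansion.
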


\begin{proof}
see Appendix \ref{proof_app}.
\end{proof}

\section{Application to the inflexible contrarian opinion (ICO) model}
\label{opinionmodel}

In this section we apply the studied centrality metrics to select the
inflexible contrarians in the inflexible contrarian opinion (ICO) model
\cite{li2011strategy} to help one opinion to compete with another. Both
network models and three social networks will be considered.

\subsection{The ICO model} 

The ICO model is a variant of the non-consensus opinion (NCO) model
\cite{shao2009dynamic}. The ICO and NCO models are both opinion
competition models in which two opinions exist and compete with each
other. In the NCO model opinions are randomly assigned to all agents
(nodes). At time $t=0$ each agent is assigned opinion $A$ with a
probability $f$ and opinion $B$ with a probability $1-f$. At each
subsequent time step each agent adopts the opinion of the majority of
its nearest neighbors and itself.  When there is a tie, the opinion of
the agent does not change. All of the updates are made simultaneously in
parallel at each step. The system reaches a state in which the opinions
$A$ and $B$ coexist and are stable when $f$ is above a critical
threshold $f_{c}$.

When the NCO model is in the stable state, the ICO model further selects
a fraction $p_{o}$ of agents with opinion $A$ to be the inflexible
contrarians who will hold opinion $B$, will never change their opinion,
but will influence the opinion of other agents.  The two opinions then
compete with each other according to the update rules of the NCO model.
The system will reach a new stable state by following these opinion
dynamics.

We use $S_{1}$ and $S_{2}$ to denote the size of the largest and the
second largest clusters of agents with opinion $A$ in the new stable
state. A phase transition threshold $f_{c}$ separates two different
phases of the stable state. When $f>f_{c}$, a giant component of agents
with opinion $A$ exists and the coexistence of opinions $A$ and $B$ is
stable. When $f\leq f_{c}$, no giant component of agents with opinion
$A$ exists ($S_{1}=0$). The $f_{c}$ depends on $p_{o}$. When $p_{o}=0$,
the ICO model clearly reduces to the classical NCO model and they have
the same critical threshold $f_{c}$. When $0<p_{o}<p^{\ast}$, the
threshold $f_{c}$ of the ICO model increases with $p_{o}$, but the size
$S_{1}$ for the finial stable state decreases with $p_{o}$. When $p$ is
above a certain value $p^{\ast}$, the phase transition no longer occurs,
and the giant component of agents with opinion $A$ is completely
destroyed ($S_{1}=0$).

\subsection{Strategies of selecting inflexible contrarians using centrality metrics}

The final stable state of the ICO model is affected not only by the
percentage $p_{o}$, but also by how inflexible
contrarian agents are selected. Here we select the inflexible contrarians based on their
centrality metrics. Li \textit{et al.} \cite{li2011strategy} studied the ICO
model by choosing the inflexible contrarian agents with opinion $A$
either randomly or according to highest degree. The degree strategy is significantly
more effective than the random strategy in reducing the size $S_{1}$ of the
largest opinion $A$ cluster in the stable state when $p_{o}$ is the
same. Here we want to determine which centrality metric used to pick the inflexible
contrarians reduces $S_{1}$ most efficiently. We also want to determine 
whether the $S_{1}$ decrease is similar when the inflexible contrarians are
chosen based on two strongly correlated (with a large Pearson correlation
coefficient or a high centrality similarity) centrality metrics.
Here the inflexible contrarians are chosen as nodes with highest (i)
betweenness, (ii) degree, (iii) $1$st-order degree mass, (iv) $2$nd-order
degree mass, (v) eigenvector component, (vi) $k$-shell index, or (vii) leverage
or (viii) chosen randomly.

\begin{figure}[H]
\centering
\includegraphics [width=0.5\textwidth]{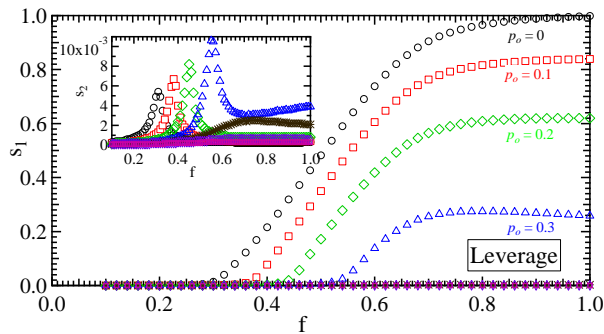}
\caption{An example: the results of leverage strategy. Plot of
  $s_{1}\equiv S_{1} /N$ as a function of $f$ for different values of
  $p_{o} $ for ER networks with $E[D]=4$ and $N=10^{4} $. We denote by
  $S_{1} $ the size of the largest $A$ opinion cluster in the
  steady-state. Different marks show the results of ICO model with
  different $p_{o} $: $p_{o}$=0($\circ$), $p_{o}$=0.1($\square$),
  $p_{o}$=0.2($\diamond $), $p_{o}$=0.3($\triangle$),
  $p_{o}$=0.4($\ast$), $p_{o}$=0.5($\lozenge $),
  $p_{o}$=0.6($\boxtimes$). The insets plot the $s_{2}\equiv S_{2} /N$,
  where $S_{2} $ is the size of the second largest $A$ opinion cluster,
  as a function of the $f$ for different values of $p_{o} $.}
\label{opinion_figure1}
\end{figure}

\subsection{Comparison of inflexible contrarian selection strategies}

We first compare the efficiency in decreasing the size $S_{1}$ of the
largest opinion $A$ cluster in ER and SF networks when choosing the
inflexible contrarians using different centrality metrics. We consider
ER networks ($N=10^{4}$ or $10^{5}$) with $E[D]=4$, and SF networks
($N=10^{4}$ or $10^{5}$) with $\alpha=2.5$, and perform all the
simulations on $10^{3}$ network
realizations. Figure~\ref{opinion_figure1} shows a plot of
$s_{1}=S_{1}/N$ as a function of $f$ for different values of $p_{o}$ in
ER networks (with $N=10^{4}$) using a leverage strategy. The size
$s_{2}=S_{2}/N$ shows a sharp peak, a characteristic of a second-order
phase transition, in the insets of Fig.~\ref{opinion_figure1}. As
$p_{o}$ increases, $f_{c}$ shifts to a larger value and the largest
cluster becomes significantly smaller. When $p>p^{\ast}$, the giant
component with opinion $A$ disappears, i.e., $S_{1}=0$.  For example,
the $p^{\ast}$ value for the leverage strategy is between 0.3 and 0.4
(see Fig.~\ref{opinion_figure1}).  A small $p^{\ast}$ implies that the
inflexible contrarians can efficiently destroy the largest opinion $A$
cluster. We can compare the efficiency of the strategies in decreasing
$S_{1}$ by the value of $p^{\ast}$.  When we compare strategies in the
ICO model with the same $p_{o}$, a larger phase transition $f_{c}$ for a
strategy indicates that the inflexible contrarians chosen using this
strategy decreases $S_{1}$ more efficiently.  Figure~\ref{function}
plots the phase transition $f_{c}$ as a function of $p_{o}$.  Note that
the efficiency of each strategy is ranked in decreasing order as:
Leverage, Degree, Betweenness, $1$st-order Degree mass, $2$nd-order
Degree mass, $k$-shell index, Principal Eigenvector, and Random. The
same result can be also found in ER and SF networks with $N=10^{5}$.

\begin{figure}[H]
\centering
\subfloat[]{\label{function}
\includegraphics[width=0.5\textwidth]{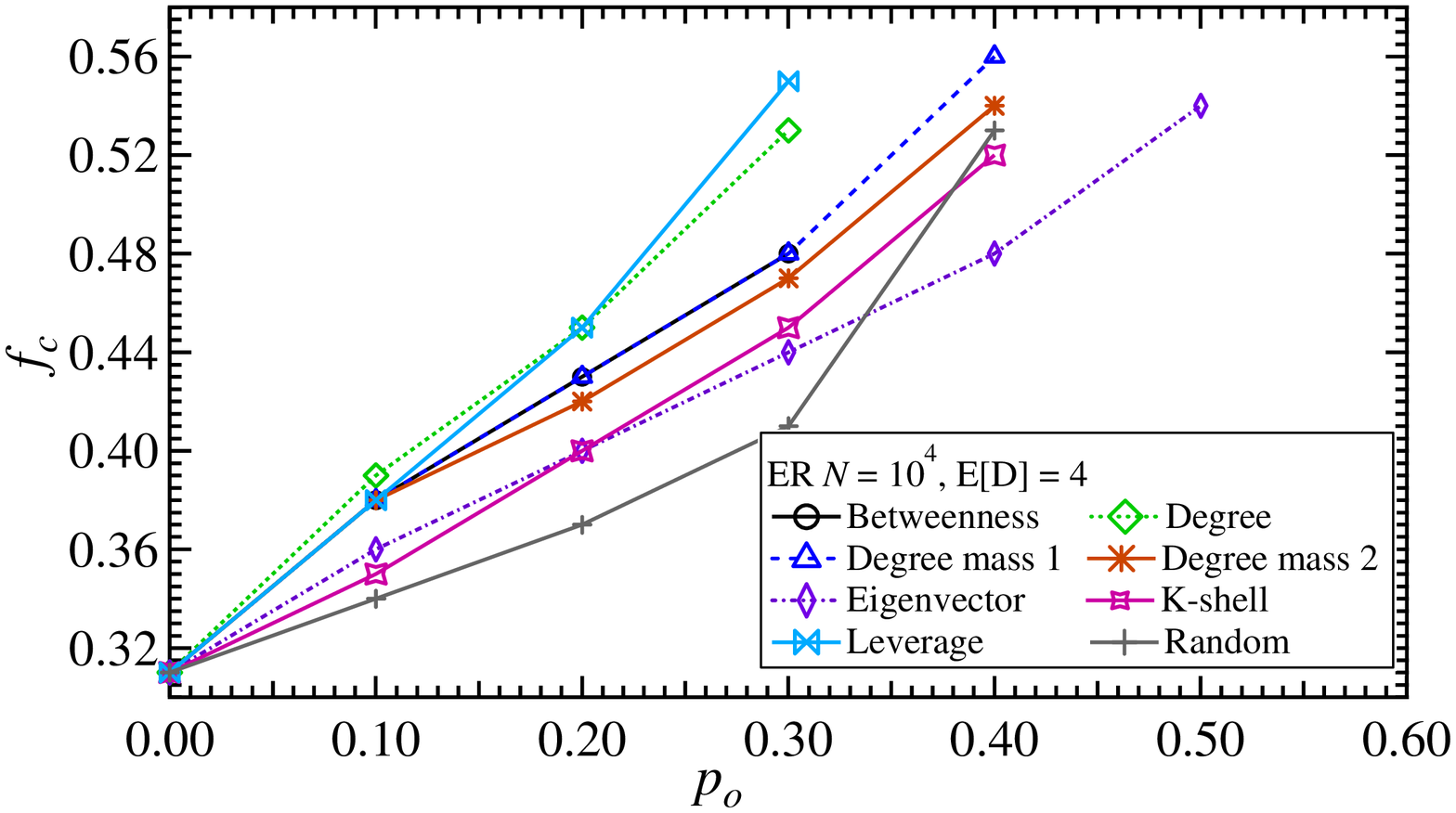}}
\subfloat[]{\label{SFvesus}
\includegraphics[width=0.5\textwidth]{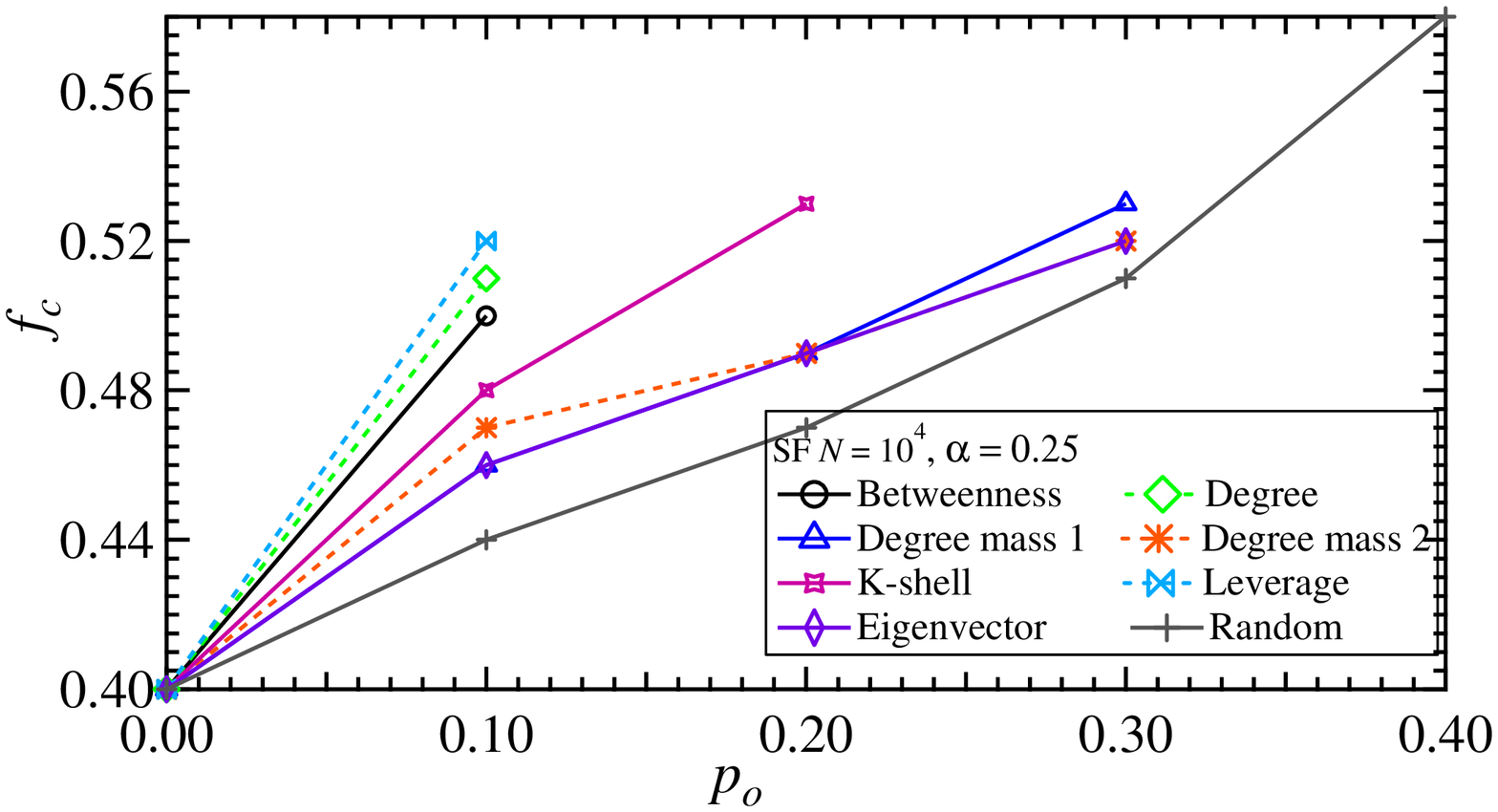}}
\bigskip
\caption{Plot of $f_{c}$ as a function of $p_{o}$ for strategies $1$ to
  $8$: (a) in ER graphs with $N=10^{4}$, $E[D]=4$; (b) in SF graphs with
  $N=10^{4}$, $D_{\min}=2$, $\alpha=2.5$.}
\label{opinion_models_10000}
\end{figure}

We find that all strategies are more efficient in SF networks than in ER
networks of the same size. We base this on two observations. First, the
relative change of $f_{c}$ with $p_{o}$ for all strategies in SF
networks is larger than it is in ER networks. Second, the $p^{\ast}$ for
all strategies in SF is much smaller than it is in ER networks. The
reason for this may be that (i) hubs can be readily selected as
inflexible contrarians when using centrality metrics in SF networks, and
(ii) hubs can strongly influence the opinion of their large number of
nearest neighbors.

Figure \ref{opinin_realworld} compares these centrality metrics in
real-world networks, i.e., the ConMat 95-99 network, the ConMat 95-03
network, and the Astro\_Ph network. Note that the inflexible contrarians
selected using the leverage $L_{n}$, the betweenness $B_{n}$, and the
degree $D$ are the most efficient in helping opinion $B$ win the
competition. The similar behaviors of the three strategies are supported
by the large Pearson correlation coefficient $\rho(B_{n}$, $D)$ and the
large centrality similarities $M_{B_{n},D}(\Upsilon)$,
$M_{L_{n},D}(\Upsilon)$ and $M_{L_{n},B_{n}}(\Upsilon)$.

\begin{figure}[H]
\centering
%%%\subfloat[]{\label{conmat99}
%%%\includegraphics[width=0.33\textwidth]{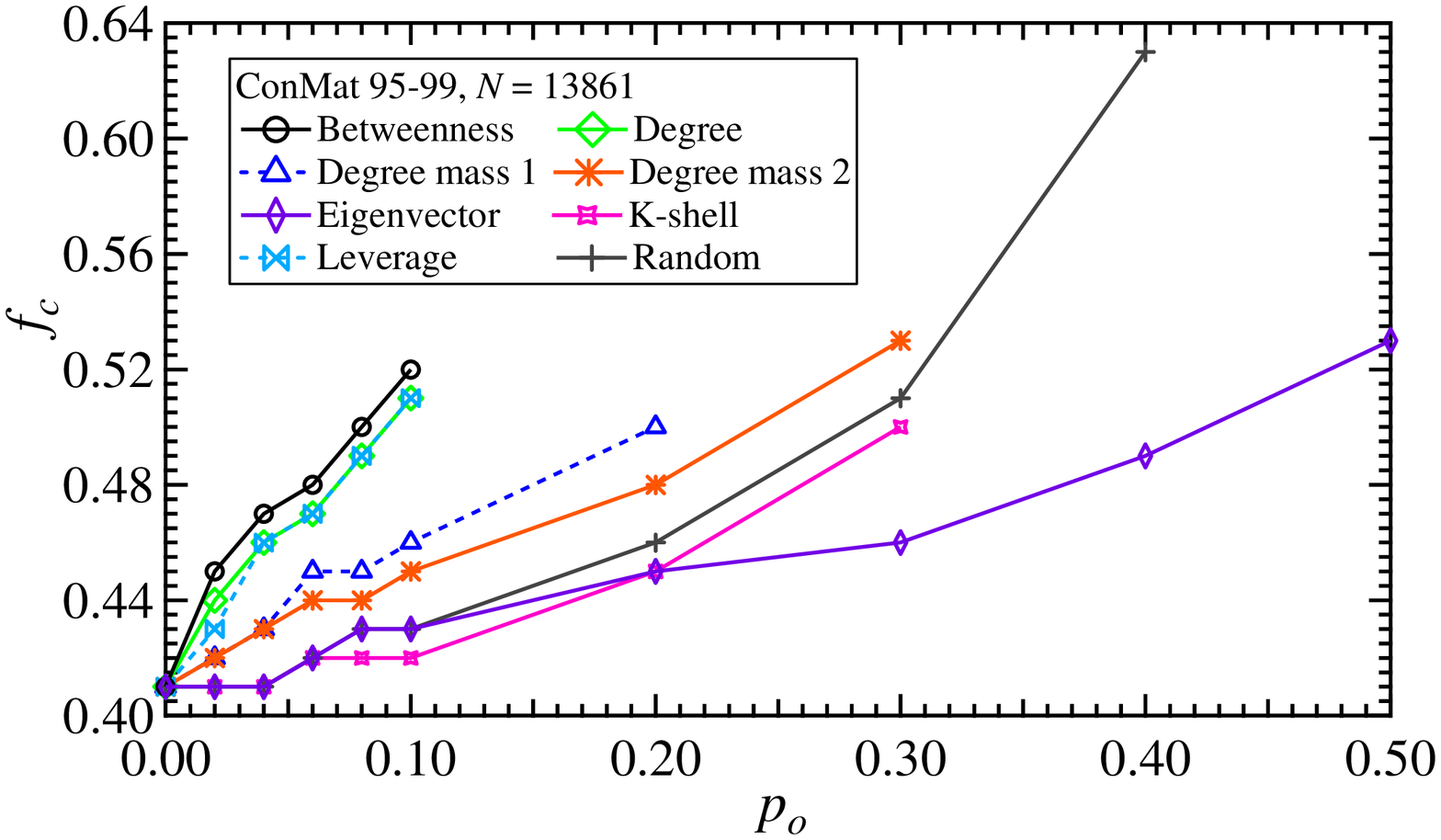}}
%%%\subfloat[]{\label{conmat03}%
%%%\includegraphics[width=0.33\textwidth]{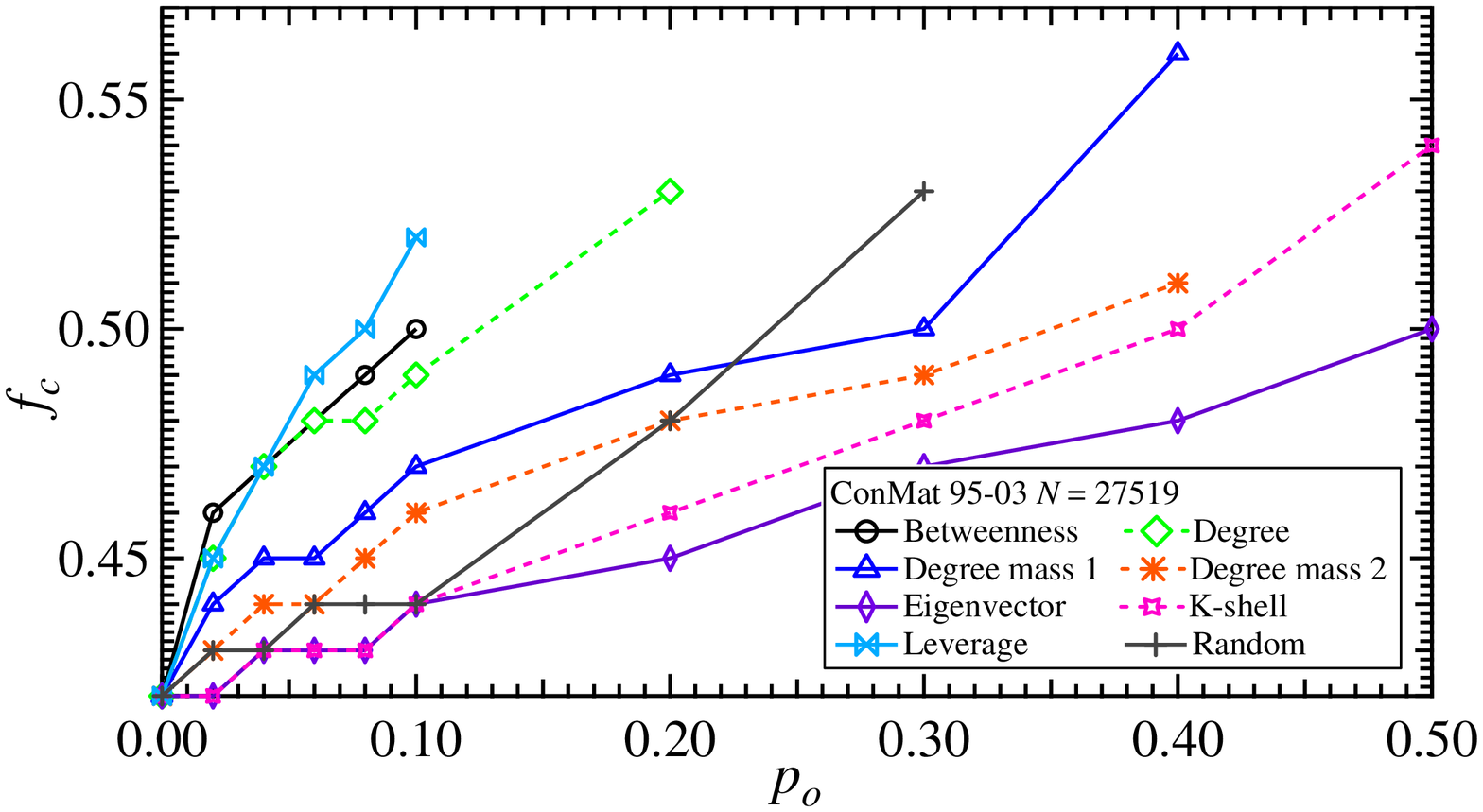}}
%%%\subfloat[]{\label{astroph}%
%%%\includegraphics[width=0.33\textwidth]{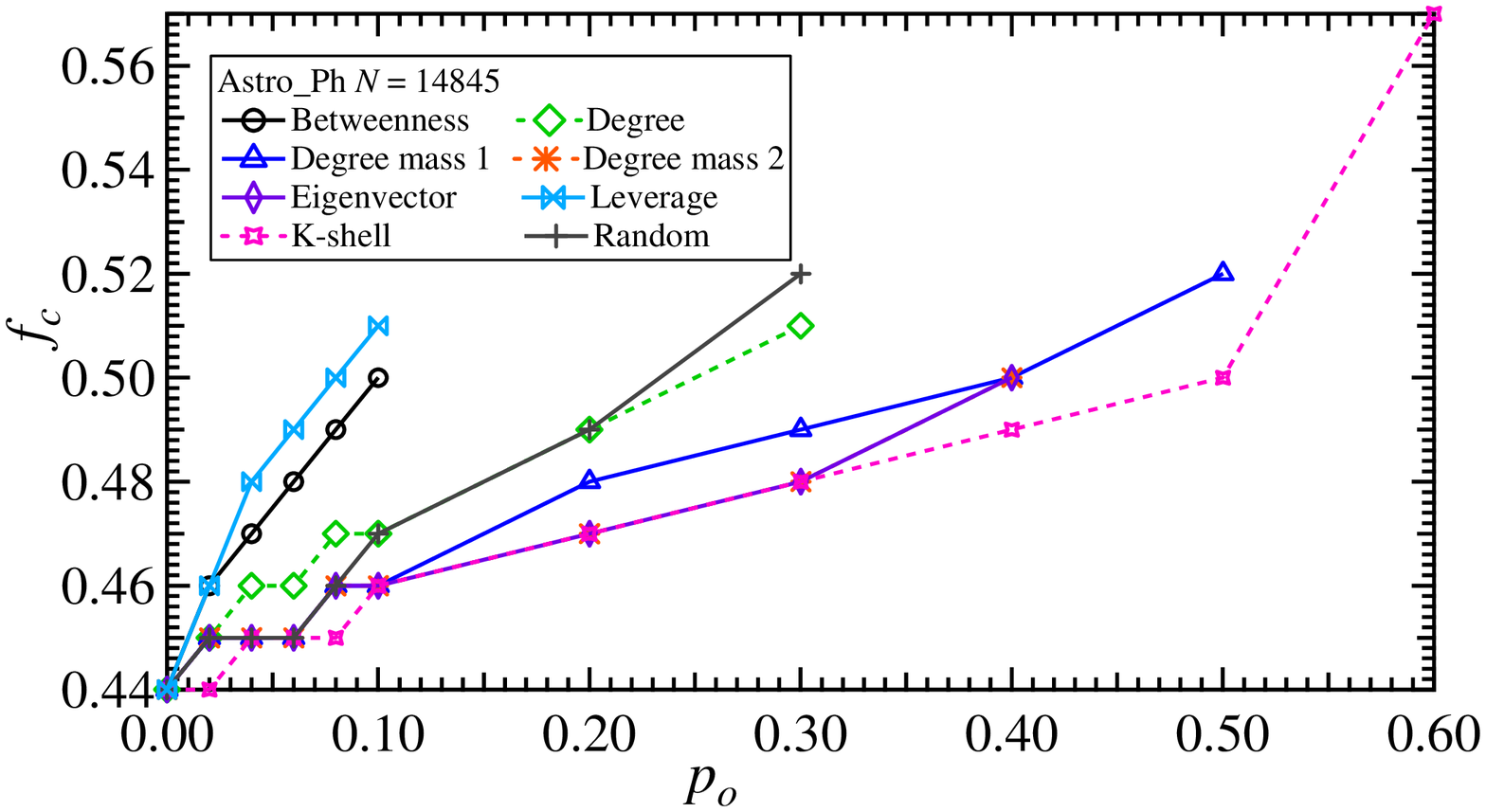}}
\subfloat[]{\label{conmat99}
\includegraphics[width=0.333\textwidth]{ConMat99small.eps}}
\subfloat[]{\label{conmat03}%
\includegraphics[width=0.333\textwidth]{ConMat03small.eps}}
\subfloat[]{\label{astroph}%
\includegraphics[width=0.333\textwidth]{AstroPhsmall.eps}}
\bigskip
\caption{Plot of $f_{c}$ as a function of $p_{o}$ for strategies in
  social networks: (a) in network of coauthorships between scientists
  posting preprints on ConMat E-Print Archives between $1995$ to $1999$;
  (b) in network of coauthorships between scientists posting preprints
  on ConMat E-Print Archives between $1995$ to $2003$; (c) in network of
  coauthorships between scientists posting preprints on Astrophysics
  E-Print Archives between $1995$ to $1999$.}
\label{opinin_realworld}
\end{figure}

In both network models and real-world networks, strongly correlated
centrality metrics tend to perform similarly. For example, we have
discovered both numerically and theoretically that
$\rho(D^{(2)},X_{1})\geq\rho(D^{(1)},X_{1})$. Correspondingly, the
principal eigenvector $x_{1}$ strategy performs closer to the 2nd-order
degree mass $D^{(2)}$ than the 1st-order degree mass $D^{(1)}$
in the ICO model.

\section{Conclusion}

In this paper we have studied the correlation between widely studied and
recently proposed centrality metrics in numerous real-world networks as
well as in network models, i.e., as in Erd\H{o}s-R\'{e}nyi (ER) random
networks and scale-free (SF) networks. A strong correlation between
two centrality metrics indicates the possibility of approximating one
centrality metric, usually the one with a higher computational
complexity, using the other. We study the correlations between the
centrality metrics using the Pearson correlation coefficient and the
centrality similarity. An important finding is that the degree $D$, the
$1$st-order degree mass $D^{(1)}$, and the $2$nd-order degree mass
$D^{(2)}$ are strongly correlated with the betweenness $B_{n}$, the
closeness $C_{n}$, and the principal eigenvector $x_{1}$,
respectively. This observation is partially supported by our analytical
proof that $\rho(X_{1},D^{(2)})>\rho(X_{1},D^{(1)})>\rho(X_{1},D)$. 

We have introduced the degree mass $D^{(m)}$ as a new network centrality
metric. The $0$th-order degree mass is the degree and the high-order
($m\rightarrow\infty$) degree mass is proportional to the principal
eigenvector $x_{1}$. We also find that the influence of network size
(the number $N$ of nodes) on the Pearson correlation coefficients is
small. In addition, the leverage $L_{n}$ has high centrality
similarities with the degree $D$ and the betweenness $B_{n}$. We use
these centrality metrics to select the inflexible contrarians in the ICO
model to help one opinion to compete with the other. The leverage
$L_{n}$ turns out to be the most efficient strategy in both network
models and real-world networks. We also find that strongly correlated
metrics perform similarly in the ICO model. This suggests that the
metrics with a low computational complexity, such as the degree $D$ and
the leverage $L_{n}$, could be used to approximate more complex metrics,
e.g., the betweenness $B_{n}$, to locate important nodes in complex
networks. Examples of important nodes would include inflexible
contrarians in opinion propagation networks and nodes that should be
immunized in disease transmission networks.

\textbf{Acknowledgements}

The authors are grateful to Shlomo Havlin for discussion and useful
comments. This work has been supported by the European Commission within
the framework of the CONGAS project FP7-ICT-2011-8-317672 and the China
Scholarship Council (CSC).

\bibliographystyle{ieeetr} % IEEE bibtex formatter
%\bibliographystyle{elsarticle-num} % Elsevier bibtex formatter
%\bibliography{bib/biblio,bib/MyPhDLibrary}
\bibliography{Opinionref}

\begin{thebibliography}{10}

\bibitem{strogatz2001exploring}
S.~H. Strogatz, ``Exploring complex networks,'' {\em Nature}, vol.~410,
  no.~6825, pp.~268--276, 2001.

\bibitem{boccaletti2006complex}
S.~Boccaletti, V.~Latora, Y.~Moreno, M.~Chavez, and D.-U. Hwang, ``Complex
  networks: Structure and dynamics,'' {\em Physics reports}, vol.~424, no.~4,
  pp.~175--308, 2006.

\bibitem{barrat2008dynamical}
A.~Barrat, M.~Barthelemy, and A.~Vespignani, {\em Dynamical processes on
  complex networks}.
\newblock Cambridge University Press, Cambridge, U.K., 2008.

\bibitem{comin2011identifying}
C.~H. Comin and L.~da~Fontoura~Costa, ``Identifying the starting point of a
  spreading process in complex networks,'' {\em Physical Review E}, vol.~84,
  no.~5, p.~056105, 2011.

\bibitem{kitsak2010identification}
M.~Kitsak, L.~K. Gallos, S.~Havlin, F.~Liljeros, L.~Muchnik, H.~E. Stanley, and
  H.~A. Makse, ``Identification of influential spreaders in complex networks,''
  {\em Nature Physics}, vol.~6, no.~11, pp.~888--893, 2010.

\bibitem{borge2012absence}
J.~Borge-Holthoefer and Y.~Moreno, ``Absence of influential spreaders in rumor
  dynamics,'' {\em Physical Review E}, vol.~85, no.~2, p.~026116, 2012.

\bibitem{pastor2002immunization}
R.~Pastor-Satorras and A.~Vespignani, ``Immunization of complex networks,''
  {\em Physical Review E}, vol.~65, no.~3, p.~036104, 2002.

\bibitem{borgatti2005centrality}
S.~P. Borgatti, ``Centrality and network flow,'' {\em Social networks},
  vol.~27, no.~1, pp.~55--71, 2005.

\bibitem{freeman1979centrality}
L.~C. Freeman, ``Centrality in social networks conceptual clarification,'' {\em
  Social networks}, vol.~1, no.~3, pp.~215--239, 1979.

\bibitem{Friedkin1991theoretical}
N.~E. Friedkin, ``Theoretical foundations for centrality measures,'' {\em
  American journal of Sociology}, vol.~96, pp.~1478--1504, 1991.

\bibitem{mullen1991effects}
B.~Mullen, C.~Johnson, and E.~Salas, ``Effects of communication network
  structure: Components of positional centrality,'' {\em Social Networks},
  vol.~13, no.~2, pp.~169--185, 1991.

\bibitem{newman2008mathematics}
M.~E.~J. Newman, ``The mathematics of networks,'' {\em The new palgrave
  encyclopedia of economics}, vol.~2, pp.~1--12, 2008.

\bibitem{van2014graph}
P.~Van~Mieghem, ``Graph eigenvectors, fundamental weights and centrality
  metrics for nodes in networks,'' {\em arXiv preprint arXiv:1401.4580}, 2014.

\bibitem{joyce2010new}
K.~E. Joyce, P.~J. Laurienti, J.~H. Burdette, and S.~Hayasaka, ``A new measure
  of centrality for brain networks,'' {\em PLoS One}, vol.~5, no.~8, p.~e12200,
  2010.

\bibitem{kim2007reliability}
P.-J. Kim and H.~Jeong, ``Reliability of rank order in sampled networks,'' {\em
  The European Physical Journal B}, vol.~55, no.~1, pp.~109--114, 2007.

\bibitem{koschutzki2004comparison}
D.~Kosch{\"u}tzki and F.~Schreiber, ``Comparison of centralities for biological
  networks,'' in {\em German Conference on Bioinformatics}, pp.~199--206, 2004.

\bibitem{estrada2007characterization}
E.~Estrada, ``Characterization of topological keystone species: local, global
  and "meso-scale" centralities in food webs,'' {\em Ecological Complexity},
  vol.~4, no.~1, pp.~48--57, 2007.

\bibitem{li2012degree}
C.~Li, H.~Wang, and P.~Van~Mieghem, ``Degree and principal eigenvectors in
  complex networks,'' in {\em Proceedings of NETWORKING 2012}, pp.~149--160,
  Springer, 2012.

\bibitem{Erdos1959random}
P.~Erd{\H{o}}s and A.~R{\'e}nyi, ``On random graphs, i.,'' {\em Publ. Math.
  Debrecen}, vol.~6, pp.~290--297, 1959.

\bibitem{barabasi1999emergence}
A.-L. Barab{\'a}si and R.~Albert, ``Emergence of scaling in random networks,''
  {\em science}, vol.~286, no.~5439, pp.~509--512, 1999.

\bibitem{cohen2010complex}
R.~Cohen and S.~Havlin, {\em Complex networks: structure, robustness and
  function}.
\newblock Cambridge University Press, Cambridge, U.K., 2010.

\bibitem{galam2005local}
S.~Galam, ``Local dynamics vs. social mechanisms: A unifying frame,'' {\em EPL
  (Europhysics Letters)}, vol.~70, no.~6, p.~705, 2005.

\bibitem{castellano2009statistical}
C.~Castellano, S.~Fortunato, and V.~Loreto, ``Statistical physics of social
  dynamics,'' {\em Reviews of modern physics}, vol.~81, no.~2, p.~591, 2009.

\bibitem{shao2009dynamic}
J.~Shao, S.~Havlin, and H.~E. Stanley, ``Dynamic opinion model and invasion
  percolation,'' {\em Physical review letters}, vol.~103, no.~1, p.~018701,
  2009.

\bibitem{li2013non}
Q.~Li, L.~A. Braunstein, H.~Wang, J.~Shao, H.~E. Stanley, and S.~Havlin,
  ``Non-consensus opinion models on complex networks,'' {\em Journal of
  Statistical Physics}, vol.~151, no.~1-2, pp.~92--112, 2013.

\bibitem{qu2014non}
B.~Qu, Q.~Li, S.~Havlin, H.~E. Stanley, and H.~Wang, ``Non-consensus opinion
  model on directed networks,'' {\em arXiv preprint arXiv:1404.7318}, 2014.

\bibitem{li2011strategy}
Q.~Li, L.~A. Braunstein, S.~Havlin, and H.~E. Stanley, ``Strategy of
  competition between two groups based on an inflexible contrarian opinion
  model,'' {\em Physical Review E}, vol.~84, no.~6, p.~066101, 2011.

\bibitem{van2014performance}
P.~Van~Mieghem, {\em Performance Analysis of Complex Networks and Systems}.
\newblock Cambridge University Press, 2014.

\bibitem{VanMieghem2011Graph_spectra}
P.~Van~Mieghem, {\em Graph spectra for complex networks}.
\newblock Cambridge University Press, Cambridge, U.K., 2011.

\bibitem{anthonisse1971rush}
J.~M. Anthonisse, ``The rush in a directed graph,'' {\em Stichting Mathematisch
  Centrum. Mathematische Besliskunde}, no.~BN 9/71, pp.~1--10, 1971.

\bibitem{wang2008betweenness}
H.~Wang, J.~M. Hernandez, and P.~Van~Mieghem, ``Betweenness centrality in a
  weighted network,'' {\em Physical Review E}, vol.~77, no.~4, p.~046105, 2008.

\bibitem{koschutzki2005centrality}
D.~Kosch{\"u}tzki, K.~A. Lehmann, L.~Peeters, S.~Richter, D.~Tenfelde-Podehl,
  and O.~Zlotowski, ``Centrality indices,'' in {\em Network analysis},
  pp.~16--61, Springer, 2005.

\bibitem{ma2003connectivity}
H.-W. Ma and A.-P. Zeng, ``The connectivity structure, giant strong component
  and centrality of metabolic networks,'' {\em Bioinformatics}, vol.~19,
  no.~11, pp.~1423--1430, 2003.

\bibitem{seidman1983network}
S.~B. Seidman, ``Network structure and minimum degree,'' {\em Social networks},
  vol.~5, no.~3, pp.~269--287, 1983.

\bibitem{pittel1996sudden}
B.~Pittel, J.~Spencer, and N.~Wormald, ``Sudden emergence of a giant k-core in
  a random graph,'' {\em Journal of Combinatorial Theory, Series B}, vol.~67,
  no.~1, pp.~111--151, 1996.

\bibitem{batagelj2003m}
V.~Batagelj and M.~Zaversnik, ``An o (m) algorithm for cores decomposition of
  networks,'' {\em arXiv preprint cs/0310049}, 2003.

\bibitem{cohen2000resilience}
R.~Cohen, K.~Erez, D.~Ben-Avraham, and S.~Havlin, ``Resilience of the internet
  to random breakdowns,'' {\em Physical review letters}, vol.~85, no.~21,
  p.~4626, 2000.

\bibitem{joy2005high}
M.~P. Joy, A.~Brock, D.~E. Ingber, and S.~Huang, ``High-betweenness proteins in
  the yeast protein interaction network,'' {\em BioMed Research International},
  vol.~2005, no.~2, pp.~96--103, 2005.

\bibitem{trajanovski2013robustness}
S.~Trajanovski, J.~Mart{\'\i}n-Hern{\'a}ndez, W.~Winterbach, and
  P.~Van~Mieghem, ``Robustness envelopes of networks,'' {\em Journal of Complex
  Networks}, vol.~1, no.~1, pp.~44--62, 2013.

\bibitem{VanMieghem2006performance}
P.~Van~Mieghem, {\em Performance analysis of communications networks and
  systems}.
\newblock Cambridge University Press, Cambridge, U.K., 2006.

\bibitem{newman2001random}
M.~E.~J. Newman, S.~H. Strogatz, and D.~J. Watts, ``Random graphs with
  arbitrary degree distributions and their applications,'' {\em Physical Review
  E}, vol.~64, no.~2, p.~026118, 2001.

\bibitem{li2011correlation}
C.~Li, H.~Wang, W.~de~Haan, C.~J. Stam, and P.~Van~Mieghem, ``The correlation
  of metrics in complex networks with applications in functional brain
  networks,'' {\em Journal of Statistical Mechanics: Theory and Experiment},
  vol.~2011, no.~11, p.~P11018, 2011.

\bibitem{krivelevich2003largest}
M.~Krivelevich and B.~Sudakov, ``The largest eigenvalue of sparse random
  graphs,'' {\em Combinatorics, Probability and Computing}, vol.~12, no.~1,
  pp.~61--72, 2003.

\bibitem{farkas2001spectra}
I.~J. Farkas, I.~Der{\'e}nyi, A.-L. Barab{\'a}si, and T.~Vicsek, ``Spectra of
  "real-world" graphs: Beyond the semicircle law,'' {\em Physical Review E},
  vol.~64, no.~2, p.~026704, 2001.

\end{thebibliography}

\appendix

\bigskip
\section{Description of the real-world networks}
%%%{\LARGE \appendix\bigskip}
%%%
%%%{\LARGE A. The description of the real-world networks}%
\label{Appendix_discription of realworld networks}

\bigskip

\subsection{Descriptions}
\begin {table}[H]
\caption{Descriptions of real-world networks.}
\bigskip
\label{Tablerealworlddescriptions}%
\centering
%%{\scriptsize
{\footnotesize
\begin{tabular}{|c|c|p{13cm}|}
\hline

Index & Networks & Descriptions \\\hline\hline
1 & American airline & The direct airport-to-airport American mileage a
maintained by the U.S. Bureau of Transportation Statistics.\\\hline

2 & American football &This is the network of American football games between
Division IA colleges during regular season Fall 2000, as compiled by M. Girvan
and M. Newman.\\\hline

3 & ARPANET80 & The Advanced Research Projects Agency Network as seen in 1980.\\\hline

4 & Celegensneural & Network representing the neural network of C. Elegans.\\\hline

5 & Dophins & An undirected social network of frequent associations between 62
dolphins in a community living off Doubtful Sound, New Zealand.\\\hline

6 & Dutch soccer & Dutch football players represent the nodes. Two nodes are
linked if they played together a match.\\\hline

7 & Gnutella 1  & Gnutella snapshots. Four different crawls are available.\\
8 & Gnutella 2  &\ \\
9 & Gnutella 3  &\ \\
10 & Gnutella 4 &\                   \\\hline

11 & Karate &  Social network of friendships between 35 members of a karate club
at a US university in the 1970.\\\hline

12 & LesMis & Coappearance network of characters in the novel Les Miserables.\\\hline

13 & Surfnet&  SURFNET topology inferred from the switch interface interconnections.\\\hline

14 & Electric s208 & ISCAS89 Sequential Benchmark Circuits. Each node represents
a logical operation implemented \\
15 & Electric s420& physically. Links between them relate their inputs/outputs.\\
16 & Electric s838&\ \\\hline

17& Epowergridl1 &Power-grid infrastructure at three different levels of
one city-area in Western Europe.\\
18& Epowergridl2 &\ \\
19& Epowergridl3 &\ \\\hline
20& Erailwayl1& Railway infrastructure at two levels of one
Western-European country\\
21& Erailwayl2&\ \\\hline

22& WordAdj&Adjacency network of common adjectives and nouns in the novel
David Copperfield by Charles Dickens.\\\hline

23& WordAdjEnglish& Word-adjacency networks of texts in English, French and
Japanese separately.\\
24&WordAdjFranch&\ \\
25&WordAdjJapanese&\ \\\hline

26&Internet AS (01')& Internet snapshot retrieved from the merge of different
data sources (BGP routing tables and updates: Route Views, RIPE, Abilene,
CERNET, BGP View).\\\hline
27& Astro\_Ph&Network of coauthorships between scientists posting preprints
on the Astrophysics E-Print Archive between Jan 1, 1995 and December 31, 1999.\\\hline

28&SciMet&Web of Science C. The citation network was created using the Web
of Science database SciMet. Networks created with the tool HistCite.\\\hline

29& HighE-th&High Energy Theory C. Network of coauthorships between
scientists posting preprints on the High-Energy Theory E-Print Archive between
Jan 1, 1995 and December 31, 1999.\\\hline

30&CondMat 95-03& Network of coauthorships between scientists
posting preprints on the Condensed Matter E-Print 
\\
31&CondMat 95-99 & Archive. We have two networks corresponding to different periods of time. Periods are Jan 1, 1995-December 31, 1999 and 2003 respectively.\\\hline
32& Dutch Roadmap& A graph representing the interconnection between cities in
the Netherlands.\\\hline

33& Network Science C& Coauthorship network of scientists working on network
theory and experiment, as compiled by M. Newman in May 2006.\\\hline

34& Next Generation& A typical Next Generation Transport network.\\\hline

\end{tabular}
}
\end{table}
\bigskip

\subsection{Properties of the real-world networks}

The properties of real-world networks are shown in the Table
\ref{Tablerealworldproperty}. The definition of these properties has been
described in detail in \cite{li2011correlation}.
\begin{table}[H]
\caption{Properties of real-world networks. The real-world network index is shown in Table \ref{Tablerealworlddescriptions}. $N$ is the number of nodes, $L$
is the number of links. $E[H]$ is the average shortest path, $C_{G}$ is the
clustering coefficient of networks. $\rho_{D}$ is the degree correlation
coefficient (called the assortativity) of networks. $\lambda_{1}$ is the
largest eigenvalue (called spectral radius) of the adjacency matrix of the
network. $\mu_{N-1}$ is the second smallest Laplace eigenvalue (called
spectral radius) of the networks. $\mu_{1}/\mu_{N-1}$ is the ratio of the
largest eigenvalue $\mu_{1}$ and the second smallest eigenvalue $\mu_{1}$ of
Laplacian matrix.\ $R_{G}$\ is the effective graph resistance.}

\bigskip
\label{Tablerealworldproperty}%
\ \ \ \ \ \ \ \ \ \ \ \ \ \ \ \ \ \ \ \ \ \ \ \ \ \ \ \ \ \ \ \ \ \ \ \ \ \ \ \ \ \ \ \ \ \ \ \ \ \ \ \ \ \ \ \ \ \centering{\tiny $%
\begin{array}
[c]{|ccccccccccccc|}\hline\hline
\text{Index} & N & L & E[H] & C_{G} & \rho_{D} & \lambda_{1} & \mu_{N-1} &
\mu_{1}/\mu_{N-1} & R_{G} & E[D] & \sqrt{Var[D]} & H_{\max}\\\hline
1 & 2179 & 31326 & 3.0262 & 0.4849 & -0.0409 &
144.6112 & 0.2082 & 2.0675e^{3} & 1.6072e^{4} & 28.7526 & 56.6782 & 8\\
2 & 115 & 613 & 2.5082 & 0.4032 & 0.1624 &
10.7806 & 1.4590 & 10.7350 & 1.5086e^{3} & 10.6609 & 0.8835 & 4\\
3 & 71 & 86 & 6.4849 & 0.0141 & -0.2613 & 2.7648 & 0.0374 &
170.2063 & 7.0158e^{3} & 2.4225 & 0.7442 & 17\\
4 & 297 & 2148 & 2.4553 & 0.2924 & -0.1632 & 24.3655 &
0.8485 & 159.1562 & 1.3710e^{4} & 14.4646 & 12.9443 & 5\\
 5 & 62 & 159 & 3.3570 & 0.2590 & -0.0436 & 7.1936 & 0.1730 &
78.7034 & 1.8643e^{3} & 5.1290 & 2.9319 & 8\\
6 & 685 & 10310 & 4.4583 & 0.7506 & -0.0634 & 50.8428 &
0.1613 & 372.0373 & 3.1157e^{4} & 30.1022 & 21.1957 & 11\\
 7 & 737 & 803 & 9.1351 & 0.0063 & -0.1934 & 4.8913 &
0.0073 & 2.6292e^{3} & 1.4181e^{6} & 2.1791 & 2.0069 & 24\\
 8 & 1568 & 1906 & 6.1037 & 0.0192 & -0.0946 & 13.7828 &
0.0167 & 1.1205e^{4} & 4.0212e^{4} & 2.4311 & 5.5778 & 21\\
 9 & 435 & 459 & 6.7085 & 0.0145 & -0.3301 & 8.2281 &
0.0110 & 5.9278e^{3} & 4.2533e^{5} & 2.1103 & 5.1534 & 20\\
 10 & 653 & 738 & 5.4513 & 0.0232 & -0.2459 & 12.1145 &
0.0231 & 6.2319e^{3} & 6.6603e^{5} & 2.2603 & 7.0228 & 15\\
11 & 35 & 134 & 1.9126 & 0.3908 & -0.5036 & 9.6253 & 1.7264 &
12.6030 & 221.6283 & 7.6571 & 4.7265 & 3\\
 12 & 77 & 254 & 2.6411 & 0.5731 & -0.1652 & 12.0058 & 0.2050 &
180.9490 & 3.0166e^{3} & 6.5974 & 6.0006 & 5\\
 13 & 65 & 111 & 4.1236 & 0.0359 & 0.2288 & 5.0523 & 0.1137 &
92.7068 & 3.2979e^{3} & 3.4154 & 1.9046 & 10\\
14 & 122 & 189 & 4.9278 & 0.0591 & -0.0020 & 4.1036 &
0.0836 & 135.2786 & 1.3082e^{4} & 3.0984 & 1.4395 & 11\\
15 & 252 & 399 & 5.8064 & 0.0651 & -0.0059 & 4.3600 &
0.0512 & 297.3970 & 5.8313e^{4} & 3.1667 & 1.5340 & 13\\
16 & 512 & 819 & 6.8585 & 0.0547 & -0.0300 & 5.0097 &
0.0285 & 809.9553 & 2.5149e^{5} & 3.1992 & 1.6296 & 15\\
17 & 3419 & 3953 & 21.1147 & 0.0120 & -0.1283 & 5.1781 &
<e^{-5} & >e^{15} & 4.8953e^{7} & 2.3124 & 1.8425 & 51\\
18 & 1205 & 1384 & 12.3547 & 0.0171 & 0.1082 & 4.8994 &
0.0022 & 9.1191e^{3} & 4.3901e^{6} & 2.2971 & 1.3609 & 31\\
19 & 395 & 441 & 13.6088 & 0.0201 & -0.0235 & 4.4854 &
0.0020 & 8.8844e^{3} & 7.2535e^{5} & 2.2329 & 1.2834 & 42\\
20 & 8710 & 11332 & 79.0448 & 0.0212 & -0.0219 & 2.9865 &
<e^{-5} & >e^{15} & 7.2107e^{8} & 2.6021 & 0.7696 & 213\\
21 & 689 & 778 & 34.1261 & 0.0731 & 0.0980 & 3.6926 &
7.7321e^{-3} & 1.0526e^{4} & 3.9229e^{6} & 2.2583 & 0.7658 & 84\\
22 & 112 & 425 & 2.5356 & 0.1728 & -0.1293 & 13.1502 &
0.6950 & 72.0767 & 3.7941e^{3} & 7.5893 & 6.8512 & 5\\
23 & 7377 & 44205 & 2.7780 & 0.4085 & -0.2366 &
109.4416 & <e^{-5} & 9.1266e^{15} & 2.2149e^{7} & 11.9846 & 60.8260 & 8\\
24 & 8308 & 23832 & 3.2189 & 0.2138 & -0.2330 &
60.6735 & 0.1197 & 1.5810e^{4} & 3.9917e^{7} & 5.7371 & 34.8979 & 9\\
25 & 2698 & 7995 & 3.0771 & 0.2196 & -0.2590 &
42.9980 & <e^{-5} & 5.8851e^{15} & 4.3489e^{6} & 5.9266 & 24.6695 & 8\\
26 & 12254 & 25319 & 3.6214 & 0.2992 & -0.1903 & 61.1066 &
<e^{-5} & 4.8974e^{15} & 1.0349e^{8} & 4.1324 & 33.5463 & 11\\
27 & 14845 & 119652 & 4.7980 & 0.6696 & 0.2277 & 73.8868 &
0.0302 & 1.1966e^{4} & 7.2012e^{7} & 16.1202 & 21.7466 & 14\\
28 & 2678 & 10368 & 4.1797 & 0.1736 & -0.0352 & 20.4290 &
0.0853 & 1.9365e^{3} & 2.9549e^{6} & 7.7431 & 9.2480 & 12\\
29 & 5835 & 13815 & 7.0264 & 0.5062 & 0.1852 & 18.0442 &
0.0214 & 2.3870e^{3} & 2.8800e^{7} & 4.7352 & 4.5571 & 19\\
30 & 27519 & 116181 & 5.7667 & 0.6546 & 0.1657 &
40.3097 & 0.0276 & 7.3675e^{3} & 3.3638e^{8} & 8.4437 & 10.8110 & 16\\
31 & 13861 & 44619 & 6.6278 & 0.6514 & 0.1571 &
24.9822 & 0.0292 & 3.6992e^{3} & 1.1613e^{8} & 6.4381 & 6.7598 & 18\\
32 & 29663 & 34982 & 148.7102 & 0.0443 & 0.2462 & 3.4567 &
<e^{-5} & >e^{15} & 1.5472e^{10} & 2.3586 & 0.6823 & 531\\
33 & 379 & 914 & 6.0419 & 0.7412 & -0.0817 & 10.3755 &
0.0152 & 2.3053e^{3} & 1.4826e^{5} & 4.8232 & 3.9272 & 17\\
34 & 29902 & 32707 & 7109.8681 & 0.0306 & -0.0355 &
49.5455 & <e^{-5} & >e^{15} & 2.1188e^{12} & 2.1876 & 9.7574 &
14253\\\hline\hline
\end{array}
$}\end{table}%

\bigskip
\section{Pearson correlation coefficients between centrality metrics}
\label{Appen_Pearson}
\bigskip

%%%{\Large In network models (in ER see Figures\ref{ER3}, as well as in SF networks see Figures \ref{SF})}%
The correlation indexes mentioned in the following images and tables are the indexes for pairs of centrality metrics: $1$. $(B_{n},C_{n})$; $2$. $(B_{n},D)$; $3$. $(B_{n},x_{1})$; $4$. $(B_{n},K_{s})$; $5 $. $(B_{n},L_{n})$; $6$%
. $(B_{n},D^{(1)})$; $7$. $(B_{n},D^{(2)})$; $8$. $(C_{n},D)$; $9$. $(C_{n}%
,x_{1})$; $10$. $(C_{n},K_{s})$; $11$. $(C_{n},L_{n})$; $12$. $(C_{n},D^{(1)})$;
$13$. $(C_{n},D^{(2)})$; $14$. $(D,x_{1})$; $15$. $(D,K_{s})$; $16$. $(D,L_{n})$;
$17$. $(D,D^{(1)})$; $18$. $(D,D^{(2)})$; $19$. $(x_{1},K_{s})$; $20$%
. $(x_{1},L_{n})$; $21$. $(x_{1},D^{(1)})$; $22$. $(x_{1},D^{(2)})$; $23$%
. $(K_{s},L_{n})$; $24$. $(K_{s},D^{(1)})$; $25$. $(K_{s},D^{(2)})$; $26$%
. $(L_{n},D^{(1)})$; $27$. $(L_{n},D^{(2)})$; $28$. $(D^{(1)},D^{(2)})$.

\begin{figure}[H]%
\centering
\includegraphics[
trim=0.023782in 0.000000in -0.023783in 0.000000in,
height=2.5538in,
width=5.3714in
]%
{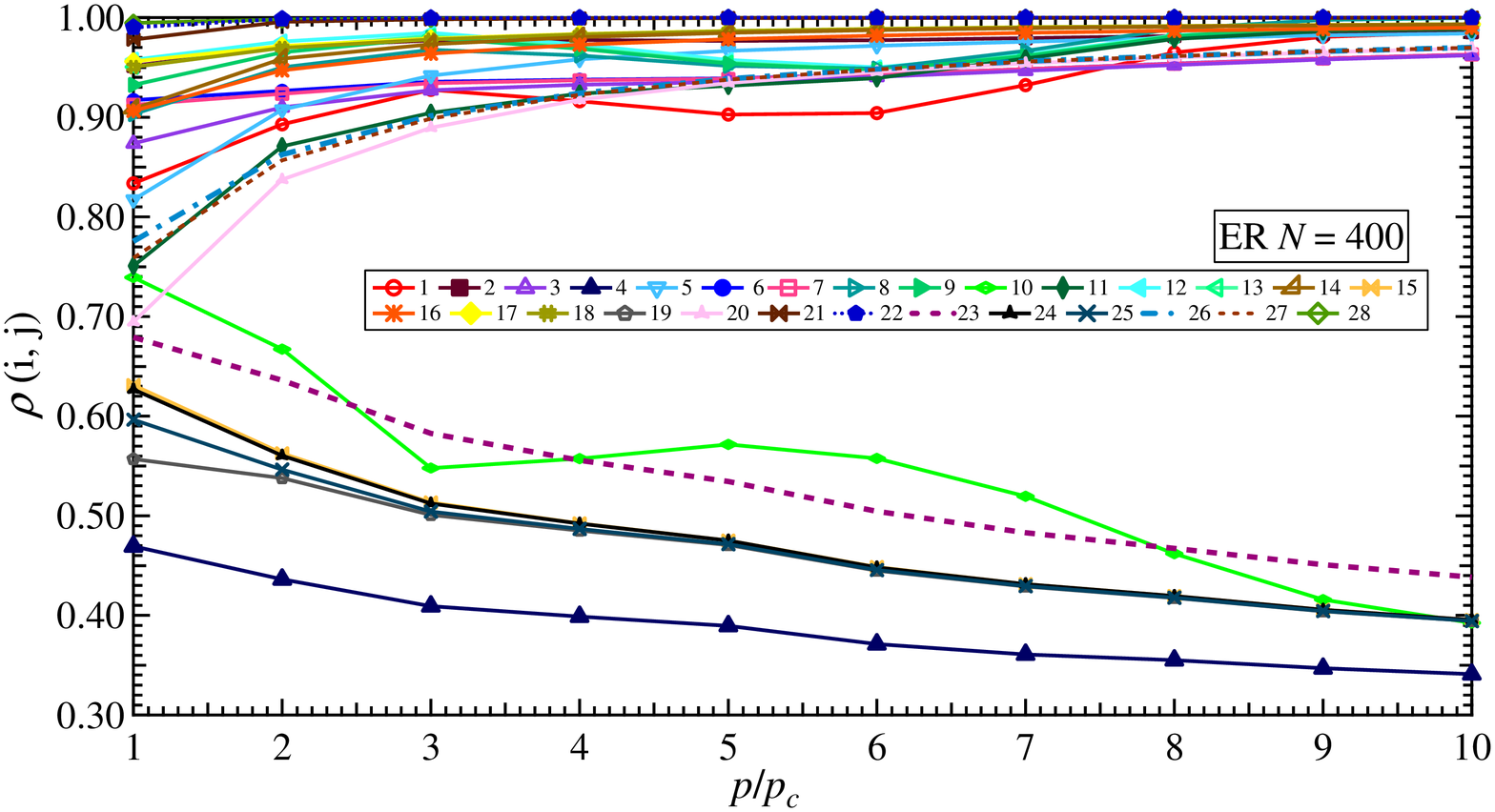}%
\caption{Pearson correlation coefficient between any two centrality metrics as
a function of the link density $p$, in ER networks ($N=400$). The number in
the annotation is the correlation index.}%
\label{ER3}%
\end{figure}
%EndExpansion%
%TCIMACRO{\FRAME{fhFU}{5.4457in}{2.8366in}{0pt}{\Qcb{Linear correlation
%coefficient between any two centrality metrics as a function of the size $N$
%of networks, in scale-free networks ($\alpha=2.5$). The number in the
%annotation is the correlation index. }}{\Qlb{SF}}{corrsf.eps}%
%{\special{ language "Scientific Word";  type "GRAPHIC";  display "USEDEF";
%valid_file "F";  width 5.4457in;  height 2.8366in;  depth 0pt;
%original-width 10.2255in;  original-height 6.3789in;  cropleft "0";
%croptop "1";  cropright "1";  cropbottom "0";
%filename '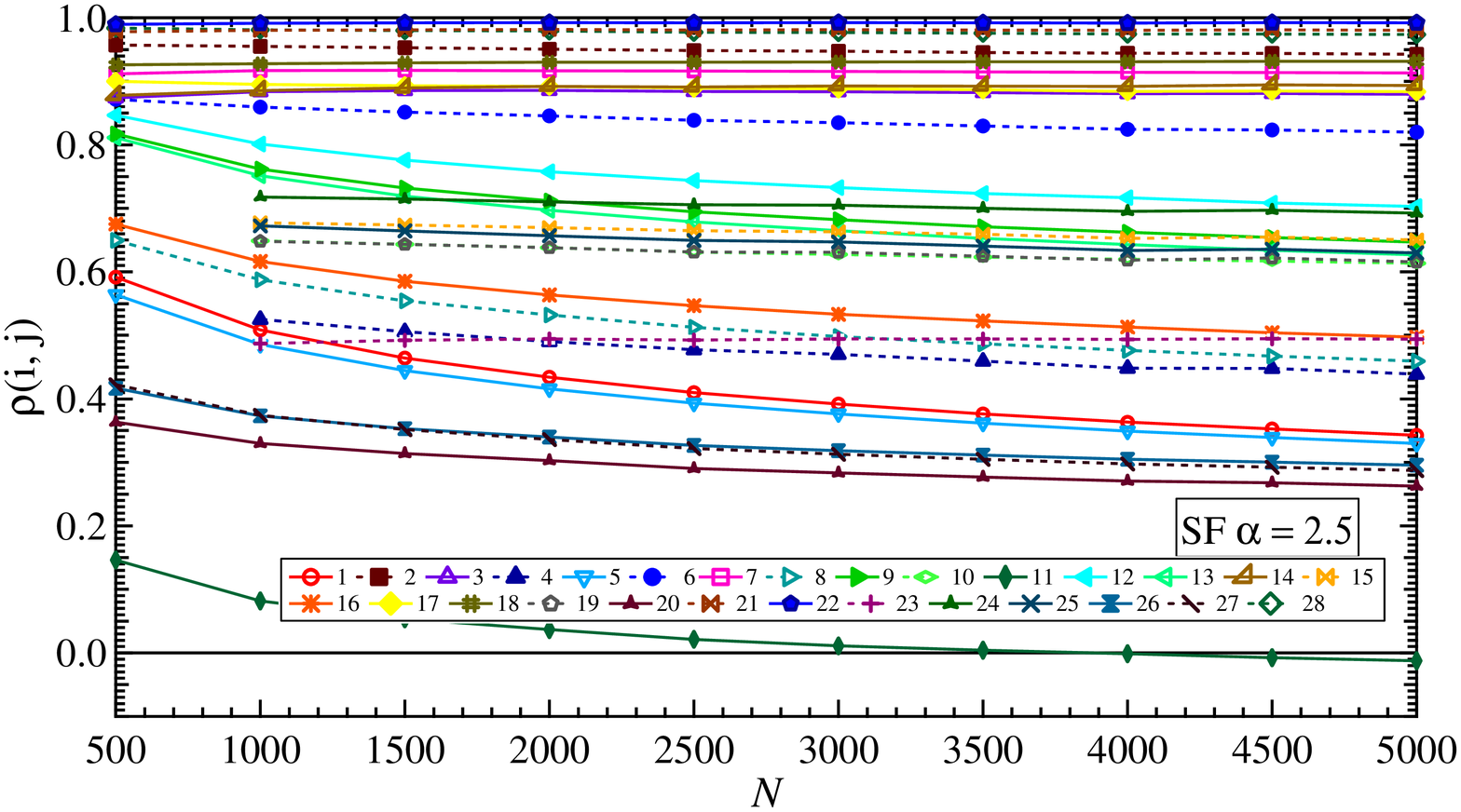';file-properties "XNPEU";}} }%
%BeginExpansion
\begin{figure}[H]%
\centering
\includegraphics[
height=2.8366in,
width=5.4457in
]%
{}%
\caption{Pearson correlation coefficient between any two centrality metrics as
a function of the size $N$ of networks, in scale-free networks ($\alpha=2.5$).
The number in the annotation is the correlation index. }%
\label{SF}%
\end{figure}
%EndExpansion

\bigskip

%%{\Large In }${\Large 34}${\Large \ real-world networks (see Table \ref{corr1})
%%}
\begin{table}[H]
\caption{Pearson correlation coefficients among the centrality metrics in
the real-world networks.}%
\bigskip
\label{corr1}%
\ \ \ \ \ \ \ \ \ \ \ \ \ \ \ \ \ \ \ \ \ \ \ \ \ \ \ \ \ \ \ \ \ \ \ \ \ \ \ \ \ \ \ \centering{\tiny $%
\begin{array}
[c]{|cccccccccc|}\hline\hline
\text{Index} & 1 & 2 & 3 & 4 & 5 & 6 & 7 & 8 & 9\\
& \rho(B_{n},C_{n}) & \rho(B_{n},D) & \rho(B_{n},x_{1}) & \rho(B_{n}%
,K_{s}) & \rho(B_{n},L_{n}) & \rho(B_{n},D^{(1)}) & \rho(B_{n},D^{(2)}) &
\rho(C_{n},D) & \rho(C_{n},x_{1})\\\hline
1 & 0.3667 & 0.5690 & 0.4119 & 0.3377 & 0.4027 &
0.4314 & 0.4224 & 0.7580 & 0.7684\\
2 & 0.8167 & 0.2813 & 0.1450 & 0.0871 & 0.3212 &
0.2230 & 0.2075 & 0.2913 & 0.2462\\
3 & 0.7129 & 0.7235 & 0.5358 & 0.3496 & 0.55585 & 0.7660 &
0.7593 & 0.4308 & 0.6851\\
 4 & 0.4271 & 0.7805 & 0.5206 & 0.1822 & 0.4212 &
0.5388 & 0.6044 & 0.6997 & 0.7827\\
5 & 0.6657 & 0.5902 & 0.2835 & 0.4703 & 0.5639 & 0.5131 &
0.4850 & 0.7127 & 0.6979\\
6 & 0.3303 & 0.4909 & 0.0857 & 0.1523 & 0.4170 &
0.3807 & 0.3113 & 0.2701 & -0.1604\\
7 & 0.4456 & 0.7292 & 0.4780 & 0.5182 & 0.4556 & 0.7575 &
0.7882 & 0.3973 & 0.5241\\
 8 & 0.2196 & 0.9691 & 0.7006 & 0.2677 & 0.2679 & 0.3858 &
0.9416 & 0.2225 & 0.5469\\
 9 & 0.2475 & 0.8839 & 0.4926 & 0.4667 & 0.4356 & 0.3533 &
0.8283 & 0.1763 & 0.5112\\
 10 & 0.2338 & 0.9603 & 0.5848 & 0.3296 & 0.3880 & 0.2640 &
0.8839 & 0.1774 & 0.5733\\
 11 & 0.8699 & 0.9651 & 0.8757 & 0.3782 & 0.8707 & 0.7999 &
0.9166 & 0.8853 & 0.9599\\
 12 & 0.6287 & 0.7468 & 0.4231 & 0.2388 & 0.5317 & 0.5534 &
0.5468 & 0.7997 & 0.6812\\
 13 & 0.7136 & 0.8743 & 0.7365 & 0.6345 & 0.6985 & 0.7999 &
0.7816 & 0.8290 & 0.9286\\
14 & 0.6408 & 0.7475 & 0.5595 & 0.2147 & 0.5551 &
0.7357 & 0.7227 & 0.6127 & 0.7987\\
15 & 0.5956 & 0.6933 & 0.5514 & 0.1583 & 0.4508 &
0.7084 & 0.7203 & 0.5541 & 0.7178\\
16 & 0.5323 & 0.7044 & 0.5410 & 0.1314 & 0.3913 &
0.6971 & 0.7661 & 0.4623 & 0.5633\\
17 & 0.2349 & 0.3843 & 0.1180 & 0.1189 & 0.1889 &
0.4101 & 0.4082 & 0.1082 & 0.0607\\
 18 & 0.3210 & 0.7005 & 0.5517 & 0.0560 & 0.2686 &
0.6772 & 0.7144 & 0.2946 & 0.4627\\
19 & 0.3001 & 0.7081 & 0.4775 & 0.1060 & 0.2945 &
0.6371 & 0.6825 & 0.2395 & 0.4925\\
20 & 0.2664 & 0.1565 & -0.0442 & 0.1979 & 0.1112 &
0.1805 & 0.1876 & 0.1477 & 0.0209\\
 21 & 0.5022 & 0.3274 & 0.0364 & 0.3836 & 0.2548 & 0.2790 &
0.2540 & 0.2428 & 0.1141\\
22 & 0.6559 & 0.9150 & 0.8226 & 0.3517 & 0.6586 & 0.7891 &
0.8444 & 0.8410 & 0.9245\\
23 & 0.1880 & 0.9225 & 0.6525 & 0.2068 & 0.2642 &
0.4157 & 0.7765 & 0.3535 & 0.6528\\
24 & 0.1874 & 0.9714 & 0.8047 & 0.2729 & 0.2636 &
0.4403 & 0.9385 & 0.2625 & 0.6215\\
25 & 0.2747 & 0.9660 & 0.7859 & 0.3249 & 0.3584 &
0.5266 & 0.8972 & 0.3868 & 0.6880\\
26 & 0.1382 & 0.9826 & 0.7994 & 0.3292 & 0.2290 & 0.3441 &
0.9582 & 0.1631 & 0.5776\\
27 & 0.3764 & 0.6787 & 0.4353 & 0.2869 & 0.4631 & 0.5670 &
0.5270 & 0.6109 & 0.4220\\
28 & 0.4068 & 0.8185 & 0.6959 & 0.3147 & 0.4401 & 0.7143 &
0.7605 & 0.6741 & 0.7030\\
29 & 0.4526 & 0.7798 & -0.0109 & 0.3574 & 0.5079 & 0.6700 &
0.5803 & 0.5774 & 0.0119\\
30 & 0.3801 & 0.7534 & 0.3753 & 0.3152 & 0.4488 &
0.5933 & 0.5173 & 0.5989 & 0.3906\\
31 & 0.4002 & 0.7225 & 0.2781 & 0.2607 & 0.4581 &
0.5718 & 0.4816 & 0.5616 & 0.3248\\
32 & 0.2214 & 0.1741 & -0.0037 & 0.1619 & 0.1117 &
0.1719 & 0.1608 & 0.1450 & -0.0221\\
33 & 0.4302 & 0.6883 & 0.1884 & 0.1917 & 0.4707 & 0.5630 &
0.4997 & 0.3468 & 0.2593\\
34 & -0.1342 & -0.0436 & -0.6295 & -0.9718 & 0.9538 &
-0.9051 & -0.1342 & 0.0313 & 0.2446\\\hline\hline
\end{array}
$}\end{table}

\begin{table}[H]
\ \ \ \ \ \ \ \ \ \ \ \ \ \ \ \ \ \ \ \ \ \ \ \ \ \ \ \ \ \ \ \ \ \ \ \ \ \ \ \ \ \ \ \ \ \ \ \ \ \centering{\tiny $%
\begin{array}
[c]{|ccccccccccc|}\hline\hline
\text{Index} & 10 & 11 & 12 & 13 & 14 & 15 & 16 & 17 & 18 & 19\\
& \rho(C_{n},K_{s}) & \rho(C_{n},L_{n}) & \rho(C_{n},D^{(1)}) & \rho
(C_{n},D^{(2)}) & \rho(D,x_{1}) & \rho(D,K_{s}) & \rho(D,L_{n}) &
\rho(D,D^{(1)}) & \rho(D,D^{(2)}) & \rho(x_{1},K_{s})\\\hline
1 & 0.8174 & 0.5944 & 0.7903 & 0.7712 & 0.9592 & 0.8730 & 0.7259 & 0.9657 &
0.9643 & 0.9254\\
2 & 0.1742 & 0.2704 & 0.2826 & 0.2839 & 0.7501 & 0.3881 & 0.9181 & 0.9619 &
0.9314 & 0.2456\\
3 & 0.3807 & 0.2524 & 0.5598 & 0.5870 & 0.4650 & 0.5127 & 0.8914 & 0.9020 &
0.9079 & 0.1326\\
4 & 0.6861 & 0.5776 & 0.8680 & 0.7951 & 0.7810 & 0.5434 & 0.7886 & 0.8830 &
0.9311 & 0.5572\\
5 & 0.7498 & 0.6094 & 0.7475 & 0.7422 & 0.7196 & 0.8303 & 0.9050 & 0.9574 &
0.9417 & 0.5388\\
6 & 0.0680 & 0.2221 & 0.2381 & 0.1801 & 0.6237 & 0.7300 & 0.8963 & 0.9393 &
0.8801 & 0.7983\\
7 & 0.5073 & 0.2052 & 0.6184 & 0.6248 & 0.4660 & 0.5933 & 0.8117 & 0.8217 &
0.8573 & 0.3912\\
8 & 0.4015 & 0.0017 & 0.7515 & 0.3210 & 0.6523 & 0.3463 & 0.3888 & 0.3594 &
0.9132 & 0.1840\\
9 & 0.2377 & -0.3534 & 0.8326 & 0.3544 & 0.5811 & 0.3316 & 0.4651 & 0.3050 &
0.9493 & 0.2032\\
10 & 0.2234 & -0.2234 & 0.8594 & 0.2967 & 0.6366 & 0.2492 & 0.3751 & 0.2256 &
0.9481 & 0.0868\\
11 & 0.5492 & 0.7227 & 0.9606 & 0.9463 & 0.9392 & 0.5331 & 0.9390 & 0.8718 &
0.9714 & 0.6221\\
12 & 0.5622 & 0.6340 & 0.8375 & 0.7931 & 0.8467 & 0.7969 & 0.8474 & 0.9455 &
0.9380 & 0.8100\\
13 & 0.7311 & 0.3466 & 0.9330 & 0.9363 & 0.9046 & 0.8289 & 0.7598 & 0.9486 &
0.9391 & 0.8425\\
14 & 0.5670 & 0.3265 & 0.7388 & 0.7574 & 0.6757 & 0.4296 & 0.8184 & 0.9260 &
0.9225 & 0.3108\\
15 & 0.5257 & 0.2675 & 0.6964 & 0.7100 & 0.6147 & 0.3995 & 0.7980 & 0.9078 &
0.9200 & 0.2464\\
16 & 0.4949 & 0.1937 & 0.6534 & 0.6411 & 0.4120 & 0.3738 & 0.7690 & 0.8670 &
0.9055 & 0.1143\\
17 & -0.0402 & -0.0122 & 0.1651 & 0.1653 & 0.2143 & 0.4102 & 0.6878 & 0.7733 &
0.8456 & 0.0447\\
18 & 0.1582 & 0.1027 & 0.4752 & 0.4902 & 0.5040 & 0.1904 & 0.5901 & 0.8725 &
0.8851 & 0.0638\\
19 & 0.2490 & 0.2137 & 0.5599 & 0.5316 & 0.5183 & 0.2287 & 0.5911 & 0.7611 &
0.8338 & 0.0327\\
20 & 0.1649 & 0.0836 & 0.1829 & 0.2016 & 0.1031 & 0.7905 & 0.9247 &
0.9522 & 0.9241 & 0.1132\\
21 & 0.4880 & 0.0325 & 0.3314 & 0.3382 & 0.2678 & 0.4149 & 0.7508 & 0.8884 &
0.8524 & 0.0835\\
22 & 0.8194 & 0.7371 & 0.9451 & 0.9123 & 0.9575 & 0.6433 & 0.8327 & 0.9390 &
0.9707 & 0.7010\\
23 & 0.7195 & 0.3891 & 0.8312 & 0.5353 & 0.8704 & 0.4649 & 0.4862 &
0.6580 & 0.9504 & 0.7992\\
24 & 0.6355 & 0.0669 & 0.8167 & 0.4111 & 0.8733 & 0.4146 & 0.3627 &
0.5403 & 0.9779 & 0.6980\\
25 & 0.6814 & 0.2080 & 0.8410 & 0.5506 & 0.8911 & 0.5155 & 0.5048 & 0.6631 &
0.9694 & 0.7628\\
26 & 0.4291 & -0.0707 & 0.7971 & 0.2788 & 0.8253 & 0.3935 & 0.2696 &
0.3771 & 0.9754 & 0.5413\\
27 & 0.5427 & 0.2819 & 0.5861 & 0.5264 & 0.7188 & 0.8070 & 0.5920 &
0.9352 & 0.8728 & 0.5695\\
28 & 0.8188 & 0.5093 & 0.7923 & 0.7456 & 0.8345 & 0.6962 & 0.7237 &
0.9204 & 0.9236 & 0.6212\\
29 & 0.4884 & 0.2103 & 0.6517 & 0.6022 & 0.1789 & 0.7311 & 0.7080 &
0.9080 & 0.8292 & 0.5171\\
30 & 0.6341 & 0.2404 & 0.6153 & 0.5392 & 0.6346 & 0.7339 & 0.6197 &
0.9035 & 0.8259 & 0.5001\\
31 & 0.5157 & 0.2077 & 0.6067 & 0.5300 & 0.5304 & 0.7166 & 0.6631 &
0.8941 & 0.8021 & 0.4229\\
32 & 0.1465 & -0.0170 & 0.2033 & 0.2220 & 0.0364 & 0.5291 & 0.7674 &
0.9271 & 0.8880 & 0.0101\\
33 & 0.0926 & 0.0970 & 0.4562 & 0.4120 & 0.4748 & 0.6803 & 0.7723 &
0.8795 & 0.8415 & 0.4195\\
34 & 0.3609 & -0.3531 & 0.3378 & 0.0649 & 0.7297 & 0.0866 & 0.0487 &
0.1570 & 0.9858 & 0.6768\\\hline\hline
\end{array}
$}\end{table}\begin{table}[H]
\ \ \ \ \ \ \ \ \ \ \ \ \ \ \ \ \ \ \ \ \ \ \ \ \ \ \ \ \ \ \ \ \ \ \ \ \ \ \ \ \ \ \ \ \ \ \ \ \ \ \ \centering{\tiny $%
\begin{array}
[c]{|cccccccccc|}\hline\hline
\text{Index} & 20 & 21 & 22 & 23 & 24 & 25 & 26 & 27 & 28\\
& \rho(x_{1},L_{n}) & \rho(x_{1},D^{(1)}) & \rho(x_{1},D^{(2)}) & \rho
(K_{s},L_{n}) & \rho(K_{s},D^{(1)}) & \rho(K_{s},D^{(2)}) & \rho(L_{n}%
,D^{(1)}) & \rho(L_{n},D^{(2)}) & \rho(D^{(1)},D^{(2)})\\\hline
1 & 0.6327 & 0.9978 & 0.9998 & 0.7122 & 0.9389 & 0.9245 & 0.6604 & 0.6405 &
0.9984\\
2 & 0.4881 & 0.8660 & 0.9134 & 0.4481 & 0.3467 & 0.3274 & 0.7771 & 0.7189 &
0.9929\\
3 & 0.1934 & 0.6460 & 0.7101 & 0.5798 & 0.4773 & 0.4407 & 0.6485 & 0.6530 &
0.9811\\
4 & 0.6130 & 0.9783 & 0.9885 & 0.7710 & 0.6277 & 0.5737 & 0.6605 & 0.6789 &
0.9813\\
5 & 0.4991 & 0.8285 & 0.8842 & 0.8506 & 0.8171 & 0.7668 & 0.7887 & 0.7535 &
0.9913\\
6 & 0.3684 & 0.8132 & 0.8867 & 0.6089 & 0.8563 & 0.8700 & 0.7478 & 0.6517 &
0.9864\\
7 & 0.2262 & 0.6736 & 0.7412 & 0.4906 & 0.6480 & 0.5920 & 0.5024 & 0.4922 &
0.9475\\
8 & 0 & 0.8135 & 0.8463 & 0.5030 & 0.3187 & 0.2061 & -0.0050 &
0.1176 & 0.4936\\
9 & -0.1161 & 0.7007 & 0.7440 & 0.3782 & 0.2365 & 0.2762 & -0.3636 & 0.2134 &
0.4889\\
10 & -0.1437 & 0.7414 & 0.8018 & 0.5184 & 0.1290 & 0.1398 & -0.3598 & 0.1438 &
0.3751\\
11 & 0.8128 & 0.9837 & 0.9930 & 0.5722 & 0.6484 & 0.5928 & 0.7290 & 0.8623 &
0.9568\\
12 & 0.6520 & 0.9427 & 0.9691 & 0.7984 & 0.8524 & 0.8447 & 0.7713 & 0.7455 &
0.9924\\
13 & 0.4673 & 0.9841 & 0.9927 & 0.6005 & 0.8604 & 0.8512 & 0.5510 & 0.5248 &
0.9969\\
14 & 0.3310 & 0.8087 & 0.8589 & 0.2983 & 0.4885 & 0.4576 & 0.6007 & 0.5809 &
0.9839\\
15 & 0.2497 & 0.7503 & 0.8010 & 0.2684 & 0.4523 & 0.4137 & 0.5520 & 0.5475 &
0.9788\\
16 & 0.0562 & 0.5789 & 0.6656 & 0.2530 & 0.4262 & 0.3673 & 0.4862 & 0.4847 &
0.9533\\
17 & 0.0545 & 0.3371 & 0.3805 & 0.7626 & 0.1513 & 0.1393 & 0.2283 &
0.2760 & 0.9458\\
18 & 0.0501 & 0.6365 & 0.6794 & 0.3420 & 0.1429 & 0.1199 & 0.2204 & 0.2123 &
0.9812\\
19 & 0.0748 & 0.7433 & 0.7697 & 0.3351 & 0.1335 & 0.1077 & 0.0448 & 0.1010 &
0.9619\\
20 & 0.0564 & 0.1303 & 0.1454 & 0.6233 & 0.8541 & 0.8624 & 0.7665 &
0.7184 & 0.9907\\
21 & 0.0347 & 0.4062 & 0.4780 & 0.2918 & 0.4205 & 0.3829 & 0.4013 & 0.3398 &
0.9842\\
22 & 0.7490 & 0.9949 & 0.9983 & 0.8031 & 0.7300 & 0.6910 & 0.7541 & 0.7622 &
0.9888\\
23 & 0.6646 & 0.9320 & 0.9790 & 0.7406 & 0.8912 & 0.6890 & 0.6611 &
0.6156 & 0.8432\\
24 & 0.3912 & 0.8774 & 0.9476 & 0.5641 & 0.7939 & 0.5488 & 0.3408 &
0.3734 & 0.6794\\
25 & 0.5507 & 0.9242 & 0.9721 & 0.6990 & 0.8180 & 0.6646 & 0.4857 & 0.5386 &
0.8112\\
26 & 0.1486 & 0.8169 & 0.8977 & 0.4876 & 0.5646 & 0.4417 & 0.0699 &
0.1943 & 0.4845\\
27 & 0.2248 & 0.8789 & 0.9367 & 0.4761 & 0.7840 & 0.7124 & 0.3996 &
0.3245 & 0.9845\\
28 & 0.4680 & 0.9417 & 0.9682 & 0.7181 & 0.7457 & 0.6886 & 0.5866 &
0.5501 & 0.9877\\
29 & 0.0427 & 0.2885 & 0.3822 & 0.5164 & 0.7657 & 0.7361 & 0.4493 &
0.3477 & 0.9771\\
30 & 0.1765 & 0.8431 & 0.9205 & 0.5016 & 0.7344 & 0.6617 & 0.3726 &
0.2850 & 0.9795\\
31 & 0.1358 & 0.7641 & 0.8725 & 0.4877 & 0.7372 & 0.6597 & 0.3945 &
0.2903 & 0.9731\\
32 & 0.0063 & 0.0524 & 0.0629 & 0.3943 & 0.5167 & 0.4740 & 0.4892 &
0.4156 & 0.9878\\
33 & 0.1267 & 0.7062 & 0.8105 & 0.5701 & 0.7390 & 0.6966 & 0.5089 &
0.4324 & 0.9766\\
34 & -0.5920 & 0.7797 & 0.8022 & -0.9766 & 0.9347 & 0.1797 & -0.9156 &
-0.0611 & 0.2549\\\hline\hline
\end{array}
$}\end{table}

\bigskip

\section{Proof of Lemmas}\label{proof_app}
\subsection{}\begin{lemma}\label{lemma_proof}
In an Erd\H{o}s-R\'{e}nyi (ER) random network $G_{p}(N)$, when $N\rightarrow
\infty$, the average $1$st-order degree mass is
\begin{equation}
E[D^{(1)}]=N(2p+p^{2}N)\text{,} \label{average_D1}%
\end{equation}
and the variance is
\begin{equation}
Var[D^{(1)}]=N(2p+4p^{2}N+p^{3}N^{2})\text{.} \label{Var_D1}%
\end{equation}
The average and the variance of $2$nd-order degree mass are
\begin{equation}
E[D^{(2)}]=N(2p+3p^{2}N+p^{3}N^{2})\text{,} \label{average_D2}%
\end{equation}
\begin{equation}
Var[D^{(2)}]=N(2p+14p^{2}N+17p^{3}N^{2}+7p^{4}N^{3}+p^{5}N^{4})\text{.}\label{Var_D2}%
\end{equation}

\end{lemma}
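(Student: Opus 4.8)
The plan is to verify the four identities \eqref{average_D1}--\eqref{Var_D2} by computing the mean and variance of the degree mass of a node directly, expanding that degree mass into a polynomial in the independent edge indicators $a_{ij}$ of $G_p(N)$ and organising the terms through the generating function of the node's local neighbourhood. By the exchangeability of the nodes of $G_p(N)$ we have $E[D^{(m)}]=E[d_1^{(m)}]$ and $Var[D^{(m)}]=Var[d_1^{(m)}]$ for the fixed node $1$, so write $d_1^{(m)}=\sum_{k=1}^{m+1}W_k$ with $W_k=(A^ku)_1=\sum a_{1j_1}a_{j_1j_2}\cdots a_{j_{k-1}j_k}$, the number of walks of length $k$ leaving node $1$; we need $W_1,W_2$ for $D^{(1)}$ and also $W_3$ for $D^{(2)}$.

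First I would fix the bookkeeping rule. After collapsing repeated edges via $a_e^2=a_e$, any product $a_{e_1}\cdots a_{e_t}$ becomes a product of $r$ distinct edge indicators and so has expectation $p^r$; hence each $E[W_k]$ and each $E[W_kW_{k'}]$ is a sum over walks (respectively, pairs of walks rooted at $1$), grouped by the rooted subgraph $S$ they span, of (the number of labellings of $S$ by distinct nodes of $\mathcal N$, a falling factorial $\sim N^{\#\text{free vertices of }S}$) $\times$ (the number of walks or walk pairs traversing $S$ with the correct edge multiplicities) $\times\, p^{\#\text{edges of }S}$, keeping only the leading power of $N$. Equivalently I would read these counts off the branching picture of the neighbourhood of node $1$: $d_1\sim\mathrm{Bin}(N-1,p)$ with generating function $g_0(x)=(1-p+px)^{N-1}$, each neighbour contributing an essentially independent further $\mathrm{Bin}(N-2,p)$ with generating function $g_1(x)=(1-p+px)^{N-2}$, so that $E[W_k]$ and $E[W_kW_{k'}]$ reduce to combinations of $g_0',g_0'',g_0'''$ and $g_1',g_1''$ at $x=1$; the finitely many short cycles through node $1$ (backtracking edges $j\to 1$, triangles) break the tree picture only with probability $O(1/N)$ per step and feed just the lower-order coefficients, so they have to be put back by hand.

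Executing this, I would first evaluate $E[W_1]$, $E[W_2]$, $E[W_3]$ (each a low-degree polynomial in $p$ with falling-factorial coefficients in $N$), add them, and keep leading powers of $N$ to obtain \eqref{average_D1} and \eqref{average_D2}. Then I would compute $E[(W_1+W_2)^2]=\sum_{k,k'\le 2}E[W_kW_{k'}]$ and $E[(W_1+W_2+W_3)^2]=\sum_{k,k'\le 3}E[W_kW_{k'}]$ by the same enumeration, and finish with $Var[D^{(m)}]=E[(d_1^{(m)})^2]-(E[d_1^{(m)}])^2$: the pairs of walks with disjoint edge sets have factoring expectations and, summed, reproduce $(E[d_1^{(m)}])^2$ up to lower order, so they cancel, leaving $Var[D^{(m)}]$ governed by the pairs of walks that share at least one edge; since such shared configurations are connected, $r$ distinct edges contribute at most order $N^rp^r$, and collecting these contributions (down to the linear-in-$N$ term) gives \eqref{Var_D1} and \eqref{Var_D2}.

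The hard part will be $Var[D^{(2)}]$. There $(W_1+W_2+W_3)^2$ forces one to list every rooted subgraph arising as the edge-union of two walks of combined length up to six and, for each, to get both the labelling count down through the term linear in $N$ (so that the $p,p^2,p^3,p^4,p^5$ coefficients are all correct) and the walk-pair multiplicity. The fully generic, tree-like configurations are easy and only contribute the top term $p^5N^4$; the delicate, error-prone step is being exhaustive about the degenerate configurations --- repeated edges, coincident intermediate vertices, backtracks to the root, triangles --- because those are precisely what produce the lower-order coefficients in \eqref{Var_D2}.
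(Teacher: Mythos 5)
Your route --- expanding $d_1^{(m)}=\sum_{k\le m+1}(A^k u)_1$ into walk counts $W_k$ and evaluating first and second moments by enumerating the rooted subgraphs spanned by one or two walks --- is genuinely different from the paper's proof, which composes Newman-style generating functions, $\varphi_{D^{(1)}}(z)=\bigl((1-p)+pz^{2}(1-p+pz)^{N-2}\bigr)^{N-1}$ and its analogue for $D^{(2)}$, and extracts moments via $E[(D^{(1)})^{n}]=\bigl[(z\,d/dz)^{n}\varphi_{D^{(1)}}(z)\bigr]_{z=1}$. Your method is in principle the more rigorous one, since the generating-function composition silently models the $2$-hop neighbourhood as a branching process. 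But as submitted it is only a plan: for a lemma whose entire content is four explicit polynomials in $N$ and $p$, with every coefficient down to the term linear in $N$ mattering, a proof must actually produce those coefficients, and you produce none of them.

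The gap is not merely one of effort: if you execute your enumeration faithfully you do not land on the stated right-hand sides. Writing $d_1^{(1)}=2D+T$ with $D=d_1$ and $T$ the number of $2$-paths $1\to j\to k$ ($k\neq 1$), one finds $E[T]=(N-1)(N-2)p^{2}$, $\mathrm{Cov}[D,T]=(N-1)(N-2)p^{2}(1-p)$ and $\mathrm{Var}[T]=(N-1)(N-2)p^{2}(1-p)+(N-1)(N-2)^{2}p^{3}+\cdots$, whence $\mathrm{Var}[D^{(1)}]\to N(4p+5p^{2}N+p^{3}N^{2})$ rather than \eqref{Var_D1}. Likewise $E[(A^{3}u)_1]=(N-1)p+2(N-1)(N-2)p^{2}+(N-1)(N-2)(N-3)p^{3}+\cdots$ because of exactly the backtracking walks $1\to j\to 1\to l$ and $1\to j\to k\to j$ that you propose to ``put back by hand'', so your method gives $E[D^{(2)}]\to N(3p+3p^{2}N+p^{3}N^{2})$, not \eqref{average_D2}. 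The displayed formulas are what the paper's branching-process generating functions yield (and, for the variances, they appear to read $E[(D^{(m)})^{2}]$ off as $\varphi''(1)$ rather than $\varphi''(1)+\varphi'(1)$); an exact walk count disagrees with them already in the coefficient of $Np$. So before anything else you must decide which object you are computing: either adopt the paper's tree-like neighbourhood model (in which case your careful bookkeeping of degenerate walk configurations is beside the point), or accept that the exact enumeration establishes a corrected version of the four identities rather than the ones displayed. At a minimum, compute one sub-leading coefficient explicitly and check it against the target before asserting that the enumeration ``gives'' \eqref{Var_D1} and \eqref{Var_D2}.
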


\begin{proof} 
The generating function for the probability distribution of node degree is
defined as%
\[
\varphi_{D}(z)=\sum_{k=0}^{N-1}z^{k}\mathrm{Prob}[D=k]\text{,}%
\]
and the generating function of the degree of the node that we arrive at by
following a randomly chosen link is%
\begin{equation}
\frac{\sum_{k}k\mathrm{Prob}[D=k]z^{k}}{\sum_{k}k\mathrm{Prob}[D=k]}%
=z\frac{\varphi_{D}^{\prime}(z)}{E[D]}\text{,}
\label{pdf_node_arrived_by_a_random_link}%
\end{equation}
where $E[.]$ is the expectation. If we start at a randomly chosen node, the
generating function of the degree of a nearest neighbor of this node follows
Eq. (\ref{pdf_node_arrived_by_a_random_link}). The $1$st-order degree mass
$D^{(1)}$\ of a node equals the degree sum of the node and its neighbors. The
generating function has the `powers' property \cite{newman2001random}, that the
distribution of the $1$st-order degree mass of a node obtained from one
nearest neighbor is generated by%
\[
\varphi_{D}(z)^{\ast}=z^{2}\frac{\varphi_{D}^{\prime}(z)}{E[D]}\text{,}%
\]
then, the distribution of the total of the $1$st-order degree mass over
$k$\ independent realizations ($k$\ nearest neighbors) of the node is
generated by $k$th power of $\varphi_{D}(z)^{\ast}$ as%
\begin{equation}
\varphi_{D^{(1)}}(z)=\varphi_{D}(\varphi_{D}(z)^{\ast})=\sum_{k}%
\mathrm{Prob}[D=k]\left(  z^{2}\frac{\varphi_{D}^{\prime}(z)}{E[D]}\right)
^{k}\text{.} \label{generating_function_D1_method}%
\end{equation}
For ER networks, $E[D]=(N-1)p$ is the average\ degree in an ER network
$G_{p}(N)$, and $\varphi_{D}(z)=(1-p+pz)^{N-1}$, thus,%
\begin{equation}
\varphi_{D^{(1)}}(z)=((1-p)+z^{2}p(1-p+pz)^{N-2})^{N-1}\text{,}
\label{generating_function_D1}%
\end{equation}
In addition, the generating function has\ the `Moments' property \cite{newman2001random}, that $E[\left(  D^{(1)}\right)  ^{n}]=\left[
(z\frac{d}{dz})^{n}\varphi_{D^{(1)}}(z)\right]  _{z=1}$. Together with
$Var[D^{(1)}]=E[\left(  D^{(1)}\right)  ^{2}]-E[D^{(1)}]^{2}$, we arrive at
the (\ref{average_D1}) and (\ref{Var_D1}), when $N\rightarrow\infty$.

Similarly, the distribution of the $2$nd-order degree mass is generated by
$\varphi_{D}(\varphi_{D^{(1)}}(\varphi_{D^{(1)}}(z)))$. Hence, we obtain the
generating function of the $2$nd-order degree mass as%
\[
\varphi_{D^{(2)}}(z)=(1-p+pz^{2}(1-p+pz)^{N-2}(1-p+pz^{2}(1-p+pz)^{N-2}%
)^{N-2})^{N-1}\text{,}%
\]
Thus, we can obtain (\ref{average_D2}) and (\ref{Var_D2}).
\end{proof}

\bigskip
\subsection{Proof of Lemma\ref{lemma12mass}}
%%{\large \textbf{Proof of Lemma} \ref{lemma12mass}}

\begin{proof}
The eigenvalue equation $Ax=\lambda x$ leads to $\lambda_{1}^{k}x_{1}=A^{k}x_{1}$, from which we obtain $u^{T}x_{1}%
%TCIMACRO{\dsum \limits_{j=1}^{m}}%
%BeginExpansion
{\displaystyle\sum\limits_{j=1}^{m}}
%EndExpansion
\lambda_{1}^{j}=u^{T}\left(
%TCIMACRO{\dsum \limits_{j=1}^{m}}%
%BeginExpansion
{\displaystyle\sum\limits_{j=1}^{m}}
%EndExpansion
A^{j}\right)  x_{1}\text{,}$ where $u^{T}x_{1}=NE[X_{1}]$ and $u^{T}%
%TCIMACRO{\dsum \limits_{j=1}^{m+1}}%
%BeginExpansion
{\displaystyle\sum\limits_{j=1}^{m+1}}
%EndExpansion
A^{j}=\left(  d^{(m)}\right)  ^{T}$. Hence, the relation between the principal
eigenvector and the $m$th-order degree mass vector can be expressed as %
$E[X_{1}]N%
%TCIMACRO{\dsum \limits_{j=1}^{m+1}}%
%BeginExpansion
{\displaystyle\sum\limits_{j=1}^{m+1}}
%EndExpansion
\lambda_{1}^{j}=\left(  d^{(m)}\right)  ^{T}x_{1}$\text{,} leading to
\begin{equation}
E[D^{(m)}X_{1}]=E[X_{1}]{\displaystyle\sum\limits_{j=1}^{m+1}}\lambda_{1}^{j}.
\end{equation}

The Pearson correlation coefficient follows as%
\begin{equation}
\rho(D^{(m)},X_{1})=\frac{E[D^{(m)}X_{1}]-E[D^{(m)}]E[X_{1}]}{\sqrt
{Var[D^{(m)}]}\sqrt{Var[X_{1}]}}=\frac{\left(
%TCIMACRO{\dsum \limits_{j=1}^{m+1}}%
%BeginExpansion
{\displaystyle\sum\limits_{j=1}^{m+1}}
%EndExpansion
\lambda_{1}^{j}-E[D^{(m)}]\right)  E[X_{1}]}{\sqrt{Var[D^{(m)}]}%
\sqrt{Var[X_{1}]}}\text{.} \label{correlationcoef}%
\end{equation}
The ratio of the two Pearson correlation coefficients is%
\begin{align}
\frac{\rho(D^{(1)},X_{1})}{\rho(D,X_{1})}=\frac{\sqrt{Var[D]}}{\sqrt
{Var[D^{(1)}]}}\left(  1+\frac{(\lambda_{1}^{2}-E[D^{2}])}{\left(  \lambda
_{1}-E[D]\right)  }\right)\label{ratio1and0}
\end{align}
For large ER graphs, $E[D]=(N-1)p\rightarrow Np$, $E[D^{2}%
]=(N-1)^{2}p^{2}-(N-1)p^{2}+(N-1)p\rightarrow N^{2}p^{2}-Np^{2}+Np$ and
$Var[D]=(N-1)p(1-p)\rightarrow Np(1-p)$. From (\ref{Var_D1}), we obtain%
\begin{align}
\frac{\sqrt{Var[D]}}{\sqrt{Var[D^{(1)}]}} =\sqrt{\frac{(1-p)}{(E[D]+2)^{2}-2}} >\frac{1}{E[D]+2}\text{.}\label{inequality_1}
\end{align}
When $N\rightarrow\infty$ and $Np=\varsigma$ ($\varsigma$ is a constant and independent of $N$), the spectral radius $\lambda_{1}\rightarrow \varsigma$, in sparse
random graphs \cite{krivelevich2003largest, farkas2001spectra}. With (\ref{ratio1and0}) and (\ref{inequality_1}), $\rho(D^{(1)},X_{1})\geq\rho(D,X_{1})$ is proved.

The ratio of the two Pearson correlation coefficients is%
\[
\frac{\rho(D^{(2)},X_{1})}{\rho(D^{(1)},X_{1})}=\frac{\left(  \lambda
_{1}+\lambda_{1}^{2}+\lambda_{1}^{3}-E[D^{(2)}]\right)  \sqrt{Var[D^{(1)}]}%
}{\left(  \lambda_{1}+\lambda_{1}^{2}-E[D^{2}]-E[D]\right)  \sqrt
{Var[D^{(2)}]}}\text{,}%
\]
with (\ref{average_D2}) and $\lambda_{1}\rightarrow Np$, when $N\rightarrow\infty$ we arrive at
\[
\frac{\left(  \lambda_{1}+\lambda_{1}^{2}+\lambda_{1}^{3}-E[D^{(2)}]\right)
}{\left(  \lambda_{1}+\lambda_{1}^{2}-E[D^{2}]-E[D]\right)  }=2E[D]+1\text{.}%
\]
With (\ref{Var_D1}) and (\ref{Var_D2}), for large sparse random networks, $\rho(D^{(2)},X_{1}
)\geq\rho(D^{(1)},X_{1})$ is proved.
\end{proof}
\end{document}